\newcommand*{\mailto}[1]{\href{mailto:#1}{\nolinkurl{#1}}}
\newtheorem{theorem}{Theorem}[section]
\newtheorem{lemma}[theorem]{Lemma}
\newtheorem{proposition}[theorem]{Proposition}
\theoremstyle{definition}
\newtheorem{remark}[theorem]{Remark}
\newcommand{\R}{\mathbb{R}}
\newcommand{\Z}{\mathbb{Z}}
\newcommand{\C}{\mathbb{C}}
\newcommand{\T}{\mathbb{T}}
\newcommand{\nn}{\nonumber}
\newcommand{\be}{\begin{equation}}
\newcommand{\ee}{\end{equation}}
\newcommand{\bea}{\begin{eqnarray}}
\newcommand{\eea}{\end{eqnarray}}
\newcommand{\ol}{\overline}
\newcommand{\ti}{\tilde}
\newcommand{\id}{\mathbb{I}}
\newcommand{\I}{\mathrm{i}}
\newcommand{\E}{\mathrm{e}}
\newcommand{\clos}{\mathop{\mathrm{clos}}}
\newcommand{\re}{\mathop{\mathrm{Re}}}
\newcommand{\im}{\mathop{\mathrm{Im}}}
\DeclareMathOperator{\res}{Res}
\newcommand{\noprint}[1]{}
\newcommand{\si}{\sigma}
\newcommand{\la}{\lambda}
\numberwithin{equation}{section}
\begin{document}

\title[Rarefaction Waves for the Toda Equation]{Rarefaction Waves for the Toda Equation \\
via Nonlinear Steepest Descent}

\author[I. Egorova]{Iryna Egorova}
\address{B. Verkin Institute for Low Temperature Physics and Engineering\\ 47, Nauky ave\\ 61103 Kharkiv\\ Ukraine\\ and V.N. Karazin Kharkiv National University\\ 4, Svobody sq.\\ 61022 Kharkiv\\ Ukraine}
\email{\href{mailto:iraegorova@gmail.com}{iraegorova@gmail.com}}

\author[J. Michor]{Johanna Michor}
\address{Faculty of Mathematics\\ University of Vienna\\
Oskar-Morgenstern-Platz 1\\ 1090 Wien\\ Austria\\ and International Erwin Schr\"odinger
Institute for Mathematics and Physics\\ Boltzmanngasse 9\\ 1090 Wien\\ Austria}
\email{\href{mailto:Johanna.Michor@univie.ac.at}{Johanna.Michor@univie.ac.at}}
\urladdr{\href{http://www.mat.univie.ac.at/~jmichor/}{http://www.mat.univie.ac.at/\string~jmichor/}}

\author[G. Teschl]{Gerald Teschl}
\address{Faculty of Mathematics\\ University of Vienna\\
Oskar-Morgenstern-Platz 1\\ 1090 Wien\\ Austria\\ and International Erwin Schr\"odinger
Institute for Mathematics and Physics\\ Boltzmanngasse 9\\ 1090 Wien\\ Austria}
\email{\href{mailto:Gerald.Teschl@univie.ac.at}{Gerald.Teschl@univie.ac.at}}
\urladdr{\href{http://www.mat.univie.ac.at/~gerald/}{http://www.mat.univie.ac.at/\string~gerald/}}

\keywords{Toda equation, Riemann--Hilbert problem, steplike, rarefaction}
\subjclass[2010]{Primary 37K40, 35Q53; Secondary 37K45, 35Q15}
\thanks{Discrete Contin. Dyn. Syst. {\bf 38}, 2007--2028 (2018)}
\thanks{Research supported by the Austrian Science Fund (FWF) under Grant No.\ V120.}

\begin{abstract}
We apply the method of nonlinear steepest descent to compute the long-time
asymptotics of the Toda lattice with steplike initial data corresponding to a rarefaction wave.
\end{abstract}

\maketitle

\section{Introduction}

In this paper we consider the doubly infinite Toda lattice
\begin{align} \label{tl}
	\begin{split}
\dot b(n,t) &= 2(a(n,t)^2 -a(n-1,t)^2),\\
\dot a(n,t) &= a(n,t) (b(n+1,t) -b(n,t)),
\end{split} \quad (n,t) \in \Z \times \R,
\end{align}
with steplike initial profile
\begin{align} \label{ini1}
\begin{split}	
& a(n,0)\to a, \quad b(n,0) \to b, \quad \mbox{as $n \to -\infty$}, \\
& a(n,0)\to \frac{1}{2} \quad b(n,0) \to 0, \quad \mbox{as $n \to +\infty$},
\end{split}
\end{align}
where $a>0$, $b\in\R$ satisfy the condition 
\be\label{main} 1 < b-2a.  
\ee 
This inequality implies that the spectra of the left and right background operators $H_\ell$ and $H_r$ 
have the following 
 mutual location: 
\[ 
\sup\si(H_r) < \inf\si(H_\ell).
\]
Here \[H_{\ell}y(n):= ay(n-1) + ay(n+1) + by(n),\ \ H_r y(n):=\frac{1}{2} y(n-1) +\frac 1 2 y(n+1),\ \ n\in\mathbb Z.\] In the case when $a=\frac{1}{2}$, the initial value problem \eqref{tl}--\eqref{main} is called rarefaction problem. We keep this name for an arbitrary $a>0$ and refer to the case $a=\frac{1}{2}$ as the classical rarefaction (CR) problem. 
The long-time asymptotics of the CR problem were studied rigorously by Deift et al.\ \cite{dkkz} in 1996 in the transitional region   where $\xi:=\frac{n}{t}\approx 0$ as $t\to +\infty$. 
To this end the authors applied the nonlinear steepest descent approach for vector Riemann--Hilbert (RH) problems. Using the same approach, our aim is to study
the region $\frac{n}{t}\in (-2 a +\varepsilon, -\varepsilon)\cup ( \varepsilon, 1-\varepsilon)$, where $\varepsilon>0$ is a sufficiently small number. 
Note that the regions $\frac{n}{t}\in (-\infty, -2 a-\varepsilon)$ and $\frac{n}{t}\in(1+\varepsilon, +\infty)$, which are called the soliton regions, can also be studied by the vector RH approach (see \cite{KTb} for decaying initial data $a=\frac{1}{2}$, $b=0$). 
Although the considerations for the soliton regions in the rarefaction case are more technical than in the decaying case, 
they are essentially the same and lead to a sum of solitons on the respective constant background. In our opinion, 
the classical inverse scattering transform with the analysis of the Marchenko equation 
provides this result easier (\cite{bdmek, BE, CK, toda}), and consequently, we will not study the soliton regions in this paper. Moreover, the transitional regions $\frac{n}{t}\approx 1$, $\frac{n}{t}\approx 0$, 
and $\frac{n}{t}\approx -2 a$ require further analysis and are also not the subject of the present paper.

For related results on the KdV equation using an ansatz based approach see \cite{LN}.
For results on the corresponding shock problem we refer to \cite{emt14, LN1, vdo} and the references therein.

In summary, we will show that there are four principal sectors with the following asymptotic behavior:
\begin{itemize}
\item In the region $n>t$, the solution $\{a(n,t), b(n,t)\}$ is asymptotically close to the constant right 
background solution $\{\frac{1}{2}, 0\}$ plus a sum of solitons corresponding to the eigenvalues $\la_j<-1$. 
\item In the region $0<n<t$, as $t\to\infty$ we have 
\be\label{ext} 
a(n,t)=\frac{n}{2t} +O\Big(\frac{1}{t}\Big),
\quad b(n,t)= 1+\frac{\frac{1}{2} -n}{t}
 +O\Big(\frac{1}{t}\Big).
\ee
\item In the region $-2a t<n< 0$, as $t\to\infty$ we have 
\be\label{ext2} 
a(n,t)=-\frac{n+1}{2t}+ O\Big(\frac{1}{t}\Big), \quad b(n,t)= b- 2a - \frac{n+\frac{3}{2}}{t}+ O\Big(\frac{1}{t}\Big).
\ee
\item In the region $n < -2a t$, the solution of \eqref{tl}--\eqref{main} is asymptotically close
to the left background solution $\{a, b\}$ plus a sum of solitons corresponding to the eigenvalues $\la_j>b+2a$.
\end{itemize}
The main terms of the asymptotics \eqref{ext} and \eqref{ext2} are  solutions of the Toda lattice equation. The terms $O(t^{-1})$ are 
uniformly bounded with respect to $n$  for $\varepsilon t \leq n\leq (1-\varepsilon)t$ in \eqref{ext} and for 
$(-2 a+\varepsilon)t \leq n\leq -\varepsilon t$ in \eqref{ext2}, where $\varepsilon>0$ is an arbitrary small value. 
Moreover, the  terms $O(t^{-1})$ are differentiable with respect to $t$, and the first derivatives 
are of order $O(\frac{n}{t^3})$. In the two middle regions we derive a precise formula for these error terms 
(see Theorem~\ref{theoras} and Proposition~ \ref{propos} below).

The following picture demonstrates the expected behavior of the Toda lattice solution in the middle regions. 
The numerically computed solution in Fig.~\ref{rarenool} corresponds to ``pure'' steplike initial data $a(n,0)=\frac{1}{2}$, $b(n,0)=0$ for $n\geq 0$ and $a(n,0)=0.4$, $b(n,0)=2$ for $n<0$. 
\begin{figure}[ht]
\centering
\includegraphics[width=5cm]{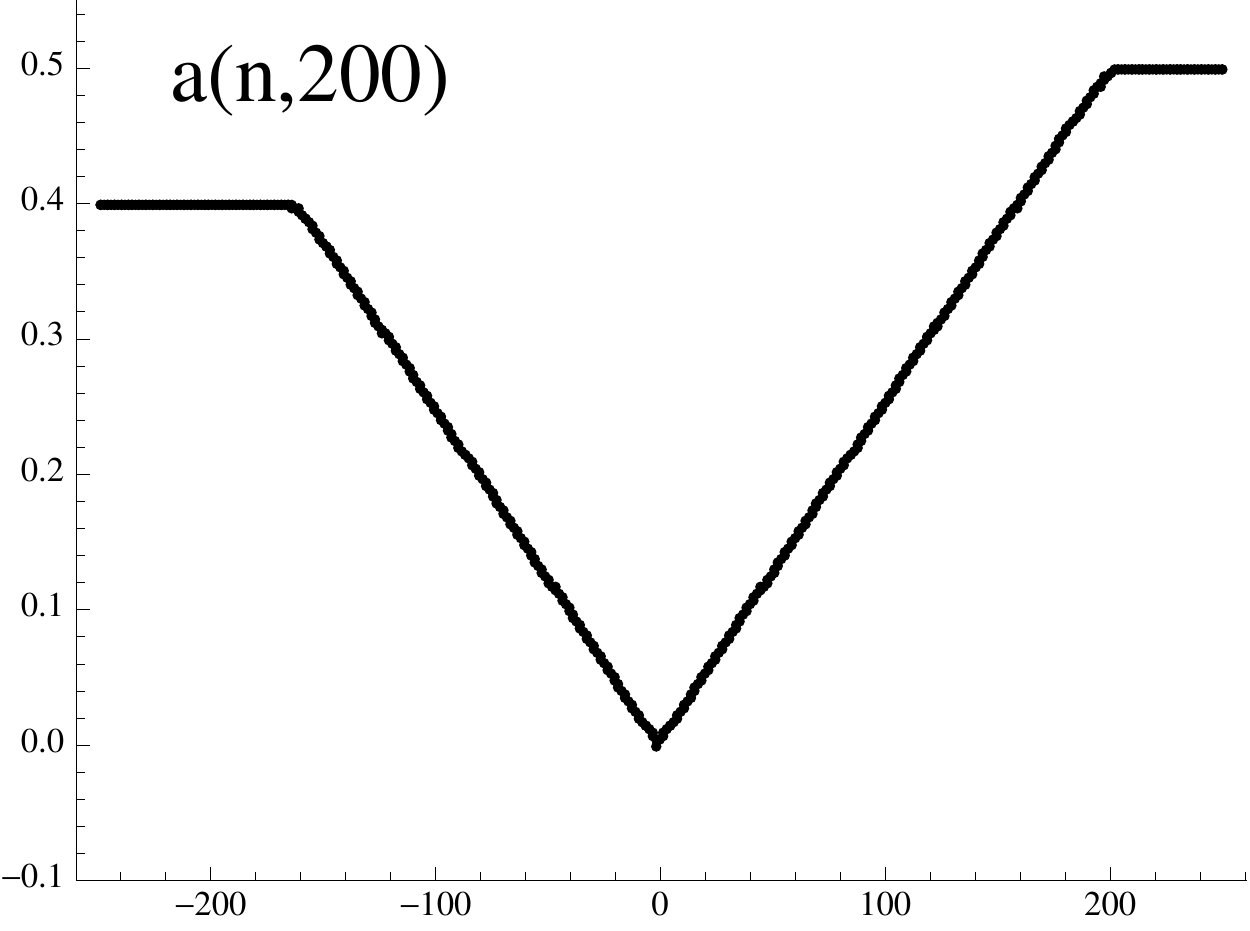}
\hspace{1cm}
\includegraphics[width=5cm]{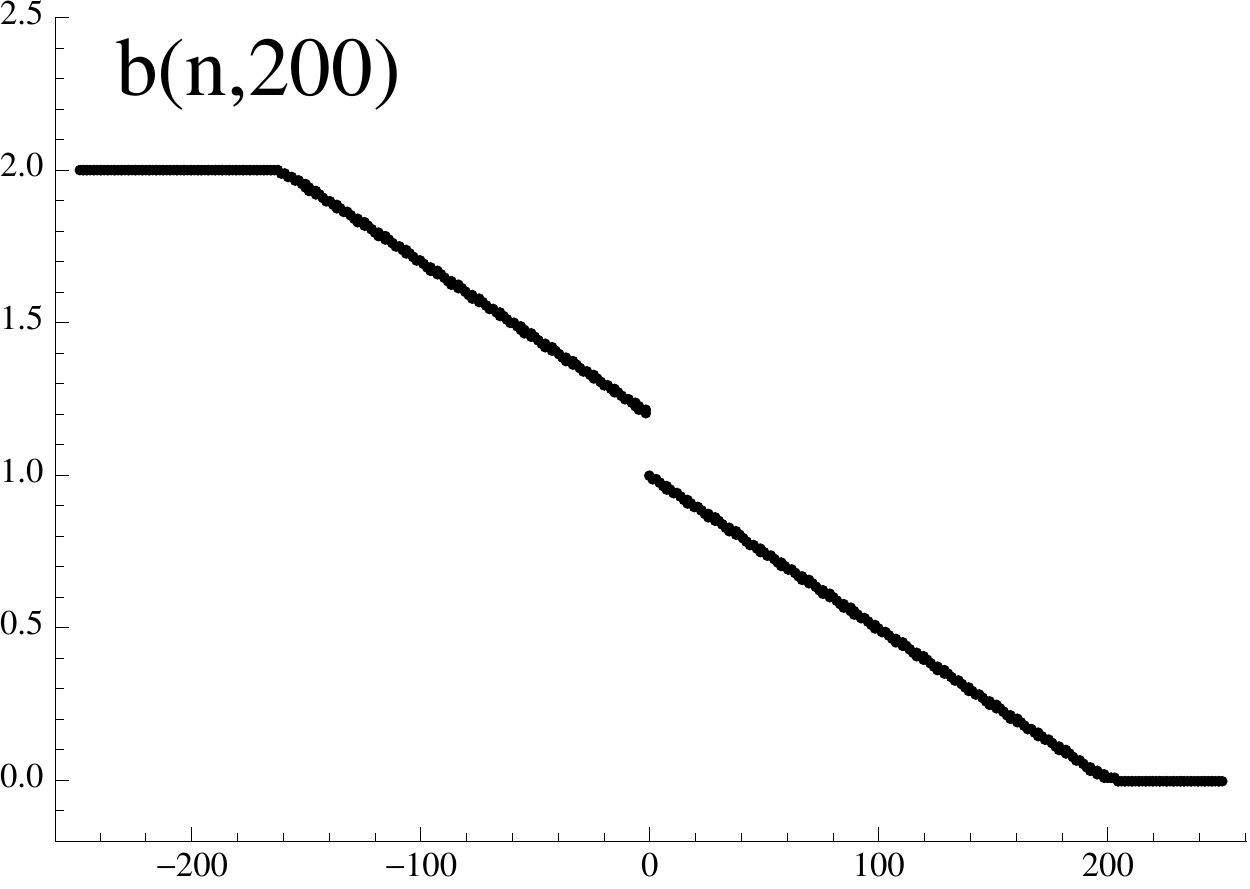}
\caption{ Toda rarefaction problem with non-overlapping background spectra $\sigma(H_{\ell})=[1.2, 2.8]$,   $\sigma(H_r)=[-1,1]$; $a=0.4$, $b=2$. } \label{rarenool} 
\end{figure}
The apparent line is due to the fact that neighboring points are very close due to the scaling. We observe that the analytically 
obtained asymptotics \eqref{ext}, \eqref{ext2} and the numerically computed asymptotics match well. 
In particular, in the analytic case the coefficient $b(n,t)$ has a jump in the transition region 
$\frac{n}{t}\approx 0$ as well.
An overview of the asymptotic solutions for \eqref{tl}--\eqref{ini1} with arbitrary constant $a>0$, $b\in\R$ can be found in \cite{m15}.

To simplify considerations we assume in addition to \eqref{ini1} that the initial data decay to their backgrounds exponentially fast
\be \label{decay}
 \sum_{n = 1}^{\infty} \E^{\nu n} \big( |a(-n,0) - a| + |b(-n,0)-b|
+ |a(n,0) - \tfrac{1}{2}| + |b(n,0)|\big) < \infty, 
\ee
where $\nu>0$ is an arbitrary small number.
This condition allows to continue the right reflection coefficient analytically to a small vicinity 
of the respective spectrum.

\section{Statement of the Riemann--Hilbert problem}

Let us first recall some elementary facts from scattering theory for Jacobi operators with steplike backgrounds 
from \cite{eg, emtstp2, emt3, emt5} (see also \cite[Chapter 10]{tjac} for general background).
The spectrum of the Jacobi operator $H(t)$ associated with the equation
\be\label{ht} 
H(t)y(n):=a(n-1,t)y(n-1) + b(n,t) y(n) + a(n,t) y(n+1)=\la y(n), \ \la\in\C,
\ee
consists of two intervals $[-1,1]\cup [b-2a,b+2a]$ of continuous spectrum 
with multiplicity one, plus a finite number of eigenvalues, 
$\{\lambda_j\}_{j=1}^{N} \subset \R \setminus ([-1,1]\cup [b-2a, b+2a])$. 
In addition to the spectral parameter $\la$ we will use two other  parameters $z$ and $\zeta$, connected with $\la$ by the Joukowsky transformation 
\be\label{spec5}
\la=\frac{1}{2}\left(z + z^{-1}\right)=b + a\left(\zeta + \zeta^{-1}\right),\quad |z|\leq 1, 
\quad |\zeta|\leq 1.
\ee
Introduce the Jost solutions $\psi(z,n,t)$, $\psi_\ell(z,n,t)$ of \eqref{ht}
with asymptotic behavior
\[
\lim_{n\to \infty} z^{-n}\psi(z,n,t) =1,\quad |z|\leq 1; \qquad
\lim_{n\to -\infty} \zeta^{n}\psi_\ell(z,n,t) =1, \quad |\zeta|\leq 1.
\] 
Denote
\[ 
q_1=z(b-2a),\quad q_2=z(b+2a), \quad z_j=z(\la_j), \quad j=1,\dots,N,
\]
where
\[
z(\la)=\la-\sqrt{\la^2-1}.
\]  
The points $z=-1$ and $z=1$ correspond to the edges of the spectrum of the right background Jacobi operator, and $q_2$ and $q_1$ correspond to the respective edges of the left background operator. We will call the points 
$z_j$ discrete spectrum.
Denote $\mathbb T=\{z: |z|=1\}$ and $\mathbb D=\{z: |z|<1\}$. 
The map $z\mapsto \la$ is one-to-one between the closed domains $\mathfrak D:=\clos(\mathbb D\setminus[q_2, q_1])$ and $\mathcal D:=\clos(\C\setminus([-1,1]\cup[b-2a,b+2a]))$.
We treat closure as adding to the boundary the points of the upper and lower sides along the cuts, while considering them as distinct points. Since the function $\zeta^n \psi_\ell(z)$ is in fact an analytical function of $\zeta$, 
it takes complex conjugated values on the sides of the cut along the interval $I:=[q_2, q_1]$, which we denote as $I\pm\I0$. Note that $z\in I-\I 0$ corresponds to the arc $\zeta\in \{|\zeta|=1, \im \zeta< 0\}$. The Jost solution $\psi_\ell(z)$  takes equal real values at $z, z^{-1}\in\mathbb T$, which yields the respective properties of the scattering matrix 
as a function of $z$.  This matrix consist of the right (resp., left) reflection coefficient $R(z,t)$ (resp. $R_\ell(z,t)$), defined for $|z|=1$ (resp., $|\zeta|=1$), and the transmission coefficients $T(z,t)$ and $T_\ell(z,t)$ defined on $\mathfrak D$. They are connected by the scattering relations
\[
\aligned T(z,t)\psi_\ell(z,n,t)&=\ol {\psi(z,n,t)} + R(z,t)\psi(z,n,t), \quad |z|=1,\\
 T_\ell(z,t)\psi(z,n,t)&=\ol {\psi_\ell(z,n,t)} + R_\ell(z,t)\psi_\ell(z,n,t),\quad z\in I\pm\I 0.
\endaligned
\]
Moreover, let $z_j$, $1\le j \le N$, (note $|z_j|<1$) be the eigenvalues and set
\begin{align}\nn
\gamma_j(t)&:=\Big(\sum_{\Z}\psi^2(z_j,n,t)\Big)^{-2},\quad j=1,\dots,N, \\ \label{defchi}
\chi(z,t)&:=-\lim_{\varepsilon \to 0} \ol {T(z-\I\varepsilon, t)} T_\ell (z-\I\varepsilon, t), \quad z\in I.
\end{align}
The set of right scattering data
\be\label{scatt}
\{R(z,t), z\in\mathbb T; \ \chi(z,t), z\in I; \ (z_j, \gamma_j(t)), 1 \leq j \leq N \}
\ee
defines the solution of the Toda lattice uniquely. Under condition 
\eqref{decay}, it has the following properties (we list only those relevant for the present paper, see \cite{dkkz, emtstp2}):
\begin{itemize}
 \item  The function $R(z,t)$ is continuous and $R(z^{-1},t)=\ol{R(z,t)}=R^{-1}(z,t)$ for $z\in\T$.
We have $R(-1)=-1$ if $z=-1$ is non-resonant\footnote{The point $\hat z\in\{-1,1,q_1,q_2\}$ is called a resonant point if $W(\hat z,t)=0$, where $W(z,t):=a(n-1,t)(\psi_\ell(z,n-1,t)\psi(z,n,t) - \psi_\ell(z,n,t)\psi(z,n-1,t))$ is the Wronskian of the Jost solutions. 
If $W(\hat z, t)\neq 0$, then $\hat z$ is non-resonant.}
and $R(-1)=1$ if $z=-1$ is resonant. The function $R(z)$  can be continued analytically in the annulus $\E^{-\nu}<|z|<1$.
\item The right transmission coefficient $T(z,t)$ can be restored uniquely from \eqref{scatt} for $z\in\mathfrak D$. It is a meromorphic function with simple poles at $z_j$.
\item The function $\chi(z,t)$ is continuous for $z\in (q_2,q_1)$ and vanishes at $q_i$, $i=1,2$, iff 
$q_i$ is a non-resonant point. If $q_i$ is a resonant point, $\chi(z)=(z-q_i)^{-1/2}$. The transmission coefficient has the same behavior at $q_i$ as $\chi(z,t)$.
\end{itemize}

On $\mathfrak D$ we define a vector-valued function $m(z)=(m_1(z,n,t), m_2(z,n,t))$, 
\be \label{defm}
m(z,n,t) =
\begin{pmatrix} T(z,t) \psi_{\ell}(z,n,t) z^n,  & \psi(z,n,t)  z^{-n} \end{pmatrix}. 
\ee
\begin{lemma}[\cite{emt14}]\label{asypm}
The components of $m(z,n,t)$ have the following asymptotic behavior
as $z \to 0$ 
\be \label{asm1}
\aligned
m_1(z,n,t) & = \prod_{j=n}^\infty 2a(j,t) \Big(1 + 2 z \sum_{m=n}^\infty b(m,t)\Big) + O(z^2),\\ 
m_2(z,n,t)&=\prod_{j=n}^\infty (2a(j,t))^{-1}\Big(1 - 2z \sum_{m=n+1}^\infty b(m,t)\Big)\Big)
+  O(z^2).
\endaligned
\ee
\end{lemma}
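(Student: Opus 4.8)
Here both components of $m$ reduce to the \emph{same} mechanism: each underlying scalar factor solves the Jacobi difference equation \eqref{ht}, so I would convert the $z\to 0$ expansion into a recursion in the lattice variable $n$ that can be solved order by order. Put $\la=\frac12(z+z^{-1})$ and write $m_i(n)$ for $m_i(z,n,t)$. Substituting $\psi(z,n,t)=z^n m_2(n)$ into \eqref{ht} and clearing the common factor $z^{n-1}$ gives
\[
a(n-1,t)\,m_2(n-1) + z\,b(n,t)\,m_2(n) + z^2 a(n,t)\,m_2(n+1) = \tfrac12(1+z^2)\,m_2(n),
\]
while $T(z,t)\psi_\ell(z,n,t)=z^{-n}m_1(n)$ (recall $T$ is independent of $n$) leads, after multiplication by $z^{n+1}$, to
\[
z^2 a(n-1,t)\,m_1(n-1) + z\,b(n,t)\,m_1(n) + a(n,t)\,m_1(n+1) = \tfrac12(1+z^2)\,m_1(n).
\]
Both recursions have coefficients that are polynomials in $z$, so inserting the formal series $m_i=\sum_{k\ge 0} z^k m_{i,k}(n)$ produces, at each order, a first-order (telescoping) recurrence in $n$.

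The decisive input is the boundary data as $n\to+\infty$. For $m_2$ the normalization $\lim_{n\to\infty}z^{-n}\psi(z,n,t)=1$ gives $m_2(n)\to 1$; for $m_1$ the corresponding fact is $m_1(n)=T\psi_\ell z^n\to 1$ as $n\to+\infty$, which is precisely the defining normalization of the transmission coefficient (equivalently, $T\psi_\ell=\wti{\psi}+R\psi$ with $\wti\psi\sim z^{-n}$ the second right Jost solution, and $R\psi\, z^n\sim R z^{2n}\to 0$ inside $\D$). Note that the expansion must be anchored at $+\infty$ rather than $-\infty$: the products $\prod_{j=n}^\infty 2a(j,t)$ and $\prod_{j=n}^\infty(2a(j,t))^{-1}$ converge only because $2a(j,t)\to 1$ there, and the series $\sum_{m\ge n} b(m,t)$ converge by \eqref{decay}.

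Solving order by order is then routine. At order $z^0$ the $m_2$-recurrence reads $m_{2,0}(n)=\tfrac1{2a(n)}m_{2,0}(n+1)$, so $m_{2,0}(n)=\prod_{j=n}^\infty(2a(j,t))^{-1}$; at order $z^1$, writing $m_{2,1}=m_{2,0}\,w$ collapses the recurrence to $w(n)-w(n-1)=2b(n,t)$, whence $w(n)=-2\sum_{m=n+1}^\infty b(m,t)$, giving the stated form of $m_2$. The $m_1$-recurrence is handled identically: order $z^0$ gives $m_{1,0}(n)=2a(n,t)m_{1,0}(n+1)=\prod_{j=n}^\infty 2a(j,t)$, and the first-order correction telescopes to $+2\sum_{m=n}^\infty b(m,t)$ (the shift in the summation range relative to $m_2$ coming from the different placement of the $z^2$-coefficient in the two recursions).

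Finally, to upgrade the formal series to a genuine asymptotic expansion with an $O(z^2)$ error, I would invoke analyticity of $m(z,n,t)$ near $z=0$: under the exponential decay \eqref{decay} the Jost solution $\psi$ is analytic in $|z|<1$, the Joukowsky map $\zeta=\zeta(z)$ is analytic at $z=0$ with $\zeta(0)=0$ (so $\psi_\ell$ is analytic in $z$ there), and $T$ is analytic near $z=0$ since $z_j\ne 0$; hence $m_i(\cdot,n,t)$ is analytic at $z=0$ and Taylor's theorem supplies the remainder. I expect the main obstacle to be exactly the justification of the $n\to+\infty$ normalization for $m_1$ together with the uniformity in $z$ needed to interchange the $z$-expansion with the limit $n\to\infty$ and with the infinite products and sums; once that interchange is legitimized, the algebra above closes the argument.
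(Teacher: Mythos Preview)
The paper does not supply a proof of this lemma at all; it is quoted from \cite{emt14} and stated without argument. Your proposal is therefore not being measured against an in-paper proof but against the standard derivation of these expansions, and on that score it is correct: substituting $\psi=z^n m_2$ and $T\psi_\ell=z^{-n}m_1$ into the Jacobi recursion, matching powers of $z$, and telescoping with the $n\to+\infty$ normalization is exactly how such formulas are obtained (cf.\ \cite[Chapter~10]{tjac}). Your identification of the two potential subtleties---the $n\to+\infty$ limit $T\psi_\ell\,z^n\to 1$ inside $\D$ (which is indeed the content of the scattering relation analytically continued off $\T$, since $R\psi\,z^n=O(z^{2n})$) and the analyticity at $z=0$ needed to justify the $O(z^2)$ remainder---is accurate, and both are settled by the decay hypothesis \eqref{decay} together with the fact that $\zeta(z)$ is analytic with $\zeta(0)=0$ and that $T$ has no pole at $0$.
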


Evidently, $m_1(z)$ is a meromorphic function with poles at $z_j$.
Let us extend $m$ to $\{z:|z|>1\}\setminus I^*$, $I^*:=[q_2^{-1}, q_1^{-1}]$, 
by  $m(z^{-1}) = m(z) \si_1$, 
where $\si_1$ is the first Pauli matrix. Recall that the Pauli matrices are given by 
\[
\si_1=\begin{pmatrix} 0  & 1 \\
1 & 0 \end{pmatrix}, \quad
\si_2= \begin{pmatrix} 0  & - \I \\
\I & 0 \end{pmatrix}, \quad
\si_3= \begin{pmatrix} 1  & 0 \\
0 & -1 \end{pmatrix},
\]
we will also use them in abbreviations as for example
\[
[d(z)]^{- \si_3} := \begin{pmatrix} d^{-1}(z) & 0 \\
0 & d(z) \end{pmatrix}.
\]
Moreover, $m_2(z)$ is a meromorphic function in $\{z: |z|>1\}\setminus I^*$ 
with poles at $z_j^{-1}$.

By definition, the vector function $m(z)$, $z\in \C$, has jumps along the unit circle and along the intervals $I$ and $I^*$.
The statement of the respective Riemann-Hilbert problem with pole conditions is given in \cite{dkkz}. We will not formulate this problem here, but instead give an equivalent statement which is valid in the domain $\xi:=\frac{n}{t}\in (-2a,0)\cup (0,1)$ we are interested in. In this domain we can reformulate the initial meromorphic RH problem as a holomorphic RH problem
as in \cite{dkkz, KTb}. We skip the details and only provide a brief outline below.

Throughout this paper, $m_+(z)$ (resp.\ $m_-(z)$) will denote the limit
of $m(p)$ as $p\to z$ from the positive (resp.\ negative) side of an oriented contour $\Sigma$. Here
the positive (resp.\ negative) side is the one which lies to the left (resp.\ right) as one 
traverses the contour in the direction of its orientation. Using this notation implicitly 
assumes that these limits exist in the sense that $m(z)$ extends to a continuous function 
on the boundary. Moreover, all contours are symmetric with respect to the map $z\mapsto z^{-1}$, i.e., they contain with each point $z$ also $z^{-1}$.
The orientation on these contours should be chosen in such a way that the following symmetry is preserved for the jump matrix of the vector RH problem and for its solution.

\vskip 2mm

{\bf Symmetry condition}. {\it Let $\hat\Sigma$ be a symmetric oriented contour. Then the jump matrix of the vector problem $m_+(z)=m_-(z)v(z)$ satisfies 
\be \label{matsym}
(v(z))^{-1}=\sigma_1 v(z^{-1}) \sigma_1,\quad z\in\hat\Sigma.
\ee
Moreover,} 
\be\label{symto}
m(z)=m(z^{-1})\sigma_1, \quad z\in\C\setminus\hat\Sigma.
\ee
Most of our transformations are conjugations with diagonal matrices, so it is convenient to use the following
\begin{lemma}[Conjugation, \cite{KTa}] \label{lem:conjug}
Let $m$ be the solution on $\C$ of the RH problem  $m_+(z)=m_-(z) v(z)$, $z \in \hat\Sigma$,  
which satisfies the symmetry  condition as above.
 Let $d: \C\setminus  \Sigma\to\C$ be a sectionally analytic function. Set
\be\label{mD}
\ti{m}(z) = m(z) [d(z)]^{-\sigma_3},
\ee
then the jump matrix of the problem $\ti{m}_+=\ti{m}_- \ti{v}$ is given by
\[
\ti{v} =
\left\{
\begin{array}{ll}
  \begin{pmatrix} v_{11} & v_{12} d^{2} \\ v_{21} d^{-2}  & v_{22} \end{pmatrix}, &
  \quad p \in \hat\Sigma \setminus  \Sigma, \\[4mm]
  \begin{pmatrix} \frac{d_-}{d_+} v_{11} & v_{12} d_+ d_- \\
  v_{21} d_+^{-1} d_-^{-1}  & \frac{d_+}{d_-} v_{22} \end{pmatrix}, &
  \quad p \in \Sigma.
\end{array}\right.
\]
If $d$ satisfies  $d(z^{-1}) = d(z)^{-1}$ for $z \in \C\setminus \Sigma$,
then the transformation \eqref{mD} respects the symmetry condition.
\end{lemma}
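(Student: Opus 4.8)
The plan is to read off $\ti v$ by direct substitution and then to show that the assumed reflection identity on $d$ forces both parts of the symmetry condition for $\ti m$ and $\ti v$. Since $m$ is a row vector with $m_+=m_-v$ and $\ti m=m\,[d]^{-\sigma_3}$, I would write $m_-=\ti m_-\,[d_-]^{\sigma_3}$ and $\ti m_+=m_+\,[d_+]^{-\sigma_3}=m_-\,v\,[d_+]^{-\sigma_3}$, so that
\[
\ti v=[d_-]^{\sigma_3}\,v\,[d_+]^{-\sigma_3}.
\]
On $\hat\Sigma\setminus\Sigma$ the function $d$ is analytic across the contour, whence $d_+=d_-=d$ and the two diagonal factors simply multiply the off-diagonal entries of $v$ by $d^{2}$ and $d^{-2}$; this gives the first case. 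On $\Sigma$, where $d$ jumps, carrying out the $2\times 2$ product with $[d_-]^{\sigma_3}$ on the left and $[d_+]^{-\sigma_3}$ on the right produces exactly the entries $\tfrac{d_-}{d_+}v_{11}$, $d_+d_-\,v_{12}$, $d_+^{-1}d_-^{-1}\,v_{21}$, $\tfrac{d_+}{d_-}v_{22}$, which is the second case. Both formulas are thus a routine matrix computation.

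Next I would verify the solution symmetry \eqref{symto} for $\ti m$. From $m(z)=m(z^{-1})\sigma_1$ and $\sigma_1^2=\id$ one has $m(z^{-1})=m(z)\sigma_1$, so
\[
\ti m(z^{-1})\sigma_1=m(z^{-1})\,[d(z^{-1})]^{-\sigma_3}\sigma_1=m(z)\,\sigma_1\,[d(z^{-1})]^{-\sigma_3}\sigma_1.
\]
The Pauli relation $\sigma_1\sigma_3\sigma_1=-\sigma_3$ yields $\sigma_1\,[d]^{-\sigma_3}\sigma_1=[d]^{\sigma_3}$, so the right-hand side is $m(z)\,[d(z^{-1})]^{\sigma_3}$. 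This equals $\ti m(z)=m(z)\,[d(z)]^{-\sigma_3}$ exactly when $[d(z^{-1})]^{\sigma_3}=[d(z)]^{-\sigma_3}$, i.e. when $d(z^{-1})=d(z)^{-1}$. Hence the stated hypothesis on $d$ is precisely what makes \eqref{symto} hold for $\ti m$.

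It remains to check the jump symmetry \eqref{matsym} for $\ti v$. The most economical route is to substitute $\ti v=[d_-]^{\sigma_3}v[d_+]^{-\sigma_3}$ into $(\ti v)^{-1}=\sigma_1\,\ti v(z^{-1})\,\sigma_1$, using $\sigma_1\,[d]^{\pm\sigma_3}\sigma_1=[d]^{\mp\sigma_3}$ and the symmetry $(v)^{-1}=\sigma_1\,v(z^{-1})\,\sigma_1$ that $v$ already satisfies. The one point that needs care is the bookkeeping of boundary values under $z\mapsto z^{-1}$: on the unit circle this map is complex conjugation and on the cuts it interchanges $I$ with $I^*$, so it reverses the local orientation and swaps the $+$ and $-$ sides. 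Consequently $d_\pm(z^{-1})$ has to be matched with $d_\mp(z)$ through the reflection identity before the diagonal factors cancel against those coming from $v$. Once this side-matching is done correctly, the identity \eqref{matsym} closes for $\ti v$; alternatively it can be read off from the already-established solution symmetry \eqref{symto} together with $\ti m_+=\ti m_-\ti v$. This orientation and boundary-value matching is the only genuinely delicate point --- everything else is elementary $2\times2$ algebra.
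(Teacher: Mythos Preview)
Your argument is correct: the formula $\ti v=[d_-]^{\sigma_3}v[d_+]^{-\sigma_3}$ follows immediately from the definition, the two cases are just whether $d$ has a jump or not, and the symmetry verification via $\sigma_1\sigma_3\sigma_1=-\sigma_3$ is the standard one. The paper itself gives no proof of this lemma, simply citing \cite{KTa}; your computation is exactly the routine verification one would expect.
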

Recall that the behavior of the solution of the RH problem is determined mostly by the behavior of the phase function
\be  \label{Phi}
\Phi(z) = \Phi(z,\xi)=\frac{1}{2} \big(z - z^{-1}\big) + \xi\log z.
\ee
Let $\xi\in (0,1)$. Part of the eigenvalues lie in the domain $\re\Phi(z)>0$ (namely $z_j\in (-1,0)$), while 
the remaining eigenvalues belong to the set $\re\Phi(z)<0$. The pole conditions at the eigenvalues are 
given by (\cite{KTb})
\be \nn
\aligned
\res_{z_j} m(z) &= \lim_{z\to z_j} m(z)
\begin{pmatrix} 0 & 0\\ - z_j \gamma_j \E^{2 t\Phi(z_j)}  & 0 \end{pmatrix},\\
\res_{z_j^{-1}} m(z) &= \lim_{z\to z_j^{-1}} m(z)
\begin{pmatrix} 0 & z_j^{-1} \gamma_j \E^{2 t\Phi(z_j)} \\ 0 & 0 \end{pmatrix}.
\endaligned
\ee
The pole conditions can be replaced by jump conditions on small curves around the eigenvalues
as in \cite{KTa}. Let 
\be\label{Blaschke}
P(z)= \prod_{z_j \in (-1,0)} |z_j| \frac{z - z_j^{-1}}{z-z_j}
\ee
be the Blaschke product corresponding to $z_j$ in the domain $\re\Phi(z)>0$  
(if any). Note that $P(z)$ satisfies $P(z^{-1})=P^{-1}(z)$. Let $\delta$ be sufficiently small 
such that the circles $\mathbb T_j=\{ z : |z - z_j|=\delta\}$ around the eigenvalues 
do not intersect and lie away from $\T \cup I$ (the precise value of $\delta$ will be chosen later).
Set 
\[
A(z)= \begin{cases} 
\begin{pmatrix}1& \frac{z-z_j}{z_j
\gamma_j  \E^{2t\Phi(z_j)} }\\ 0 &1\end{pmatrix}[P(z)]^{-\sigma_3}, & \quad  |z-z_j|< \delta, \quad z_j\in(-1,0),\\
\begin{pmatrix} 1 & 0 \\
\frac{z_j \gamma_j \E^{2t\Phi(z_j)} }{z-z_j} & 1\end{pmatrix}[P(z)]^{-\sigma_3}, & \quad |z-z_j|< \delta, \quad z_j\in(0,1),\\
\si_1 A(z^{-1})\si_1, & \quad |z^{-1}-z_j|< \delta, \quad j=1, \dots, N, \\
[P(z)]^{-\sigma_3}, & \quad \mbox{else.} \\
\end{cases}
\]
We consider the circles $\mathbb T_j$ as contours with counterclockwise orientation. 
Denote their images under the map $z\mapsto z^{-1}$ by $\mathbb T_j^*$ and orient them clockwise. 
These curves are not circles, but they  surround $z_j^{-1}$ with minimal distance from the curve to $z_j$ given by $\frac{\delta}{z_j(z_j - \delta)}$. 
We redefine the vector $m$ by 
\[
m^{\mathrm{ini}}(z)=m(z) A(z), \quad z\in\C.
\]
Then $m^{\mathrm{ini}}(z)$ is a holomorphic function in $\C\setminus\{\T\cup I\cup I^*\cup \T^\delta\}$,
\[
\T^\delta := \bigcup_{j=1}^N \T_j \cup \T_j^*,
\]
and solves the jump problem 
\[
m_+^{\mathrm{ini}}(z)=m_-^{\mathrm{ini}}(z)B(z), \quad z\in\T^\delta, 
\]
in neighborhoods of the discrete spectrum, where (cf.\ \cite{KTb})
\be\label{defB}
B(z)= \begin{cases} 
\begin{pmatrix}1& \frac{(z-z_j)P^2(z)}{z_j
\gamma_j  \E^{2t\Phi(z_j)} }\\ 0 &1\end{pmatrix}, &  \quad z\in\T_j, \quad z_j\in(-1,0),\\
\begin{pmatrix} 1 & 0 \\
\frac{z_j \gamma_j \E^{2t\Phi(z_j)} }{(z-z_j)P^2(z)} & 1\end{pmatrix}, &  \quad z\in\T_j, \quad z_j\in(0,1), \\
\sigma_1(B(z^{-1}))^{-1}\sigma_1, & \quad z\in\T_j^*, \quad j=1, \dots, N.
\end{cases}
\ee
Note that the matrix $B(z)$ satisfies the symmetry condition \eqref{matsym} and 
\[
\|B(z)-\id\|\leq C\E^{- t |\inf_j\re\Phi(z_j)|}, \quad 
z\in \T^\delta, \quad 0<\xi<1. 
\]
Here the matrix norm is to be understood as the maximum of the absolute value of its elements.

Consider the contour $\Gamma=\T \cup I\cup I^*\cup \T^\delta$, where the unit circle $\T$ is oriented 
counterclockwise and the intervals $I$, $I^*$ are oriented towards the center of the circle.
Continue the function \eqref{defchi} to $I^*$ by $\chi(z)=-\chi(z^{-1})$.
Then the following proposition is valid  (cf.\ \cite{dkkz, KTb}).

\begin{proposition}\label{thm:vecrhp} 
Suppose that the initial data of the Cauchy problem \eqref{tl}--\eqref{main} satisfy \eqref{decay}.
Let $\{ R(z), |z|=1; \chi(z), z \in I; (z_j, \gamma_j), 1 \leq j \leq N\}$ be
the right scattering data of the operator $H(0)$. Suppose that $H(0)$ has no resonances at 
the spectral edges $b-2a$, $b+2a$. Let $\xi=\frac{n}{t}\in (0,1)$. Then the vector-valued function 
$m^{\mathrm{ini}}(z)=m^{\mathrm{ini}}(z,n,t)$, connected with the initial function \eqref{defm} by 
\be\nn
m^{\mathrm{ini}}(z)=m(z)[P(z)]^{-\sigma_3}, \quad z\in\{z : |z^{\pm 1} - z_j|>\delta, \ 1 \leq j \leq N\},
\ee 
is the unique solution of the following vector Riemann--Hilbert problem:
Find a vector-valued function $m^{\mathrm{ini}}(z)$ which is holomorphic away from $\Gamma$, 
continuous up to the boundary, and satisfies:
\begin{enumerate}[I.]
	\item  The jump condition $m_{+}^{\mathrm{ini}}(z)=m_{-}^{\mathrm{ini}}(z) v(z)$,  where
\be \nn
v(z)=\left\{
\begin{array}{ll}
\begin{pmatrix}
0 & - P^2(z)\ol{R(z)} \E^{- 2 t \Phi(z)} \\
P^{-2}(z)R(z) \E^{2 t \Phi(z)} & 1
\end{pmatrix}, & \quad z \in \T,\\[3mm]
\begin{pmatrix}
1 & 0 \\
P^{-2}(z)\chi(z) \E^{2t\Phi(z)} & 1
\end{pmatrix}, & \quad z \in I,\\[3mm]
\begin{pmatrix}
1 & P^2(z)\chi(z) \E^{-2t\Phi(z)} \\
0 & 1
\end{pmatrix}, & \quad z \in  I^*,\\[3mm]
B(z),  & \quad z \in \T^\delta.
\end{array}\right.
\ee
The phase function $\Phi(z)=\Phi(z, n/t)$ is given by \eqref{Phi}, the matrix  $B(z)$ by \eqref{defB}, 
the function $P(z)$ by \eqref{Blaschke}, and $\chi(z)$ by \eqref{defchi}.
\item
The symmetry condition $m^{\mathrm{ini}}(z^{-1}) = m^{\mathrm{ini}}(z) \si_1$.
\item
The normalization condition
\be\label{eq:normcond}
m^{\mathrm{ini}}(0) = (m_1^{\mathrm{ini}}, m_2^{\mathrm{ini}}), \quad m_1^{\mathrm{ini}} \cdot m_2^{\mathrm{ini}} = 1, 
\quad m_1^{\mathrm{ini}} > 0.
\ee
\end{enumerate}
\end{proposition}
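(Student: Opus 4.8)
The plan is to treat existence and uniqueness separately. Existence is essentially built in: the vector $m^{\mathrm{ini}}$ is manufactured from the genuine scattering solution $m$ of \eqref{defm} by the explicit conjugation $m^{\mathrm{ini}}=m[P]^{-\sigma_3}$ together with the local surgery $A(z)$ near the discrete spectrum, so it suffices to verify that this concrete object satisfies (I)--(III). Uniqueness requires a separate argument and is where the real work lies.

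For the jump condition (I), I would first record the jumps of the bare vector $m(z)$ across each piece of $\Gamma$ directly from scattering theory. On $\T$ the two scattering relations, combined with the extension $m(z^{-1})=m(z)\sigma_1$ and the listed properties $R(z^{-1})=\ol{R(z)}=R(z)^{-1}$, express $m_+$ in terms of $m_-$; the factor $\E^{2t\Phi}$ arises by folding the spatial weights $z^{\pm n}$ into the known time evolution of the scattering data (so that $z^{2n}\E^{t(z-z^{-1})}=\E^{2t\Phi}$), and the vanishing $(1,1)$-entry reflects $1-\abs{R}^2=0$ on $\T$. On $I$ and $I^*$ the jumps come from the corresponding relation for $\psi_\ell$ and the definition \eqref{defchi} of $\chi$, using that $\chi$ is continuous on $(q_2,q_1)$. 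I would then apply the Conjugation Lemma~\ref{lem:conjug} with $d=P$ to produce the factors $P^{\pm2}$ in the off-diagonal entries; the contribution on $\T^\delta$ is exactly $B(z)$ of \eqref{defB} by the construction of $A(z)$. Throughout I use that the non-resonance hypothesis at $b\pm2a$ guarantees that the relevant limits exist and that $m$ extends continuously to $\Gamma$.

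Conditions (II) and (III) are short. The symmetry (II) follows from the last assertion of Lemma~\ref{lem:conjug}: since $m(z^{-1})=m(z)\sigma_1$ and the Blaschke product satisfies $P(z^{-1})=P(z)^{-1}$, the conjugation respects the symmetry, giving $m^{\mathrm{ini}}(z^{-1})=m^{\mathrm{ini}}(z)\sigma_1$. For the normalization (III), I evaluate at $z=0$ using Lemma~\ref{asypm}, which gives $m_1(0)m_2(0)=1$ with $m_1(0)=\prod_{j\ge n}2a(j,t)>0$; since $z_j\in(-1,0)$ forces $P(0)=\prod_{z_j\in(-1,0)}\abs{z_j}^{-1}>0$, the conjugation by $[P(0)]^{-\sigma_3}$ preserves the product $m_1^{\mathrm{ini}}m_2^{\mathrm{ini}}=1$ and the positivity $m_1^{\mathrm{ini}}>0$.

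The main obstacle is uniqueness, which I would establish by the standard determinant/symmetry argument for vector problems. Given two solutions $m^{\mathrm{ini}}$ and $\hat m$, set $D(z)=m_1^{\mathrm{ini}}\hat m_2-m_2^{\mathrm{ini}}\hat m_1$, the determinant of the matrix whose rows are $m^{\mathrm{ini}}$ and $\hat m$. Because every jump matrix in (I) has determinant one ($\abs{R}^2=1$ on $\T$, while the jumps on $I$, $I^*$, $\T^\delta$ are unipotent), $D$ has no jump across $\Gamma$ and is therefore analytic off the finitely many band edges $\{\pm1,q_1,q_2\}$; here the delicate point is to show that under the non-resonance assumption the at-worst quarter-power singularities of the two solutions combine into a removable singularity of $D$, so that $D$ extends to an entire function. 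Since both solutions are bounded at $z=0$ and, via the symmetry, at $z=\infty$, the function $D$ is bounded, hence constant by Liouville; the symmetry (II) forces $D(z^{-1})=-D(z)$, so the constant is $0$. Thus $m^{\mathrm{ini}}$ and $\hat m$ are proportional; writing $\hat m=f\,m^{\mathrm{ini}}$, the scalar $f$ inherits no jump, is bounded and analytic, and the normalization $m_1^{\mathrm{ini}}>0$, $m_1^{\mathrm{ini}}m_2^{\mathrm{ini}}=1$ pins $f(0)=1$, whence $f\equiv1$ and the two solutions coincide. I expect the removability of the edge singularities, together with checking that $D$ acquires no poles at the discrete spectrum after the surgery $A$, to be the technically fussiest step.
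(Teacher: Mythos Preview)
Your proposal is correct and matches what the paper intends. Note, however, that the paper does not actually give a proof: immediately after Remark~\ref{rem1} it states ``We omit the proof of Proposition~\ref{thm:vecrhp} which is essentially the same as in \cite{dkkz, KTb}. Uniqueness of the solution can be proven as in \cite{aelt}.'' Your sketch---verifying the jumps of the bare $m$ from the scattering relations and then conjugating by $[P]^{-\sigma_3}$ via Lemma~\ref{lem:conjug} for existence, and the determinant/Liouville argument exploiting $\det v\equiv 1$ and the symmetry $D(z^{-1})=-D(z)$ for uniqueness---is precisely the standard argument carried out in those references, so you are in full agreement with the paper's (outsourced) approach.

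One small remark on your uniqueness argument: under the stated non-resonance hypothesis at $b\pm 2a$, the paper notes in Remark~\ref{rem1}(i) that $m$ has continuous limits on $I$, $I^*$ and $\chi$ is bounded there, so there are in fact no edge singularities at $q_1,q_2$ to worry about; the only potentially delicate points for $D$ are $z=\pm 1$, where the square-root behavior of the Jost solutions is what you must check combines to something removable. This slightly simplifies the ``fussiest step'' you anticipated.
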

\begin{remark}\label{rem1} (i). Note that the matrix $v(z)$ satisfies the symmetry property $v(z)=\sigma_1 (v(z^{-1}))^{-1}\sigma_1$ .
Moreover, in Proposition \ref{thm:vecrhp} we assume that the points $b-2a$ and $b+2a$ are non-resonant. This means that the initial vector function $m$ 
has continuous limits on $I$, $I^*$ and that $\chi(z)$ is bounded there (otherwise $T(z)\sim (z-q_j)^{-1/2}$ and both $m$ 
and the jump matrix have singularities). For $\xi\in (0,1)$, we have $I\subset \{z : \re \Phi(z)=\Phi(z)<0\}$.  Therefore
\be\nn
\|v(z)-\id\|_{L^\infty(I\cup I^*)}\leq C\E^{-t |\Phi(q_1)|}.
\ee 
(ii). The normalization condition $m_1^{\mathrm{ini}}>0$ holds
 since it holds for the initial function $m$ and by definition, $P(0)>0$.
\end{remark}
 
We  omit the proof of Proposition \ref{thm:vecrhp} which is essentially the same as in
 \cite{dkkz, KTb}. Uniqueness of the solution can be proven as in \cite{aelt}.

\section{Reduction to the model problem}
\label{sec:model}
In this section we perform three conjugation-deformation steps which reduce the RH problem I--III 
to a simple jump problem on an arc of the circle with a constant jump matrix, 
plus jump matrices which are small with respect to $t$. The jump problem with the constant matrix can be solved 
explicitly. Note that since $|R(z)|=1$ for $z\in \T$, we cannot apply the standard lower-upper factorization 
of the jump matrices on an arc of $\T$ and the subsequent "lens" machinery near this arc (see \cite{KTb}).
For this reason we first have to find a suitable $g$-function (\cite{dvz}).

\vskip 1mm
{\it Step 1.} In this step we replace the phase function by a function with ``better'' properties. The $g$-function has the same asymptotics (up to a constant term) as $\Phi(z)$ for $z\to 0$ and $z\to\infty$ and the same oddness property, $g(z^{-1})=-g(z)$. In addition, it has the convenient property that the curves separating the domains with 
different signs of $\re g(z)$ cross at $z_0(\xi)\in\T$ and $\ol z_0(\xi)$, where $g(z_0)=g(\ol z_0)=0$. A second 
helpful property is that the $g$-function has a jump along the arc connecting $z_0$ and $\ol z_0$ which satisfies $g_+(z) =- g_-(z)>0$. This simplifies further transformations, because with this property and Lemma \ref{lem:conjug} we do not need the lens machinery around this arc.  Note that $z_0$ does not coincide with the stationary phase point of $\Phi$.
Recall that for $\xi\in (0,1)$, the curves $\re \Phi(z)=0$ intersect at the symmetric points 
$-\xi\pm\sqrt{\xi^2 - 1}\in\T$, which are the stationary points of $\Phi$. That is, the stationary phase point 
corresponds to the angle $\phi_0\in (0, \pi)$ where $\cos\phi_0=-\xi$. Set
\be\label{defz0}z_0=\E^{\I \theta_0}, \ \mbox{where}\ \cos \theta_0=1-2\xi,\quad \theta_0\in(0,\pi),
\ee 
and introduce
\be\label{g1}
g(z) = \frac{1}{2} \int_{z_0}^z \sqrt{\Big(1-\frac{1}{sz_0}\Big)\Big(1 - \frac{z_0}{s}\Big)} (1+s) \frac{ds}{s},
\ee
where $\sqrt{s}>0$ for $s> 0$. The cut of the square root in \eqref{g1} is taken between the points 
$z_0$ and $\ol {z_0}$ along the arc
\be\nn
\Sigma=\{ z\in\T: \re z\leq \re z_0=\cos\theta_0\}.
\ee
We orient $\Sigma$ in the same way as $\T$, i.e., from $z_0$ to $\ol z_0$.

\begin{lemma} \label{lem3.1} The function $g(z)$ defined in \eqref{g1} satisfies the following properties:
	
\begin{enumerate} [{\rm (a)}]
\item $\lim_{z\to\infty}(\Phi(z)-g(z)) = - K(\xi)\in\R$;
\item  $g(\ol{z_0})=0$; 
\item  $g(z^{-1}) = - g(z)$ for $z\in\C\setminus\Sigma$;
\item  It has a jump along the arc $\Sigma$ with $g_+(z)=- g_-(z) > 0$ for $z\neq z_0^{\pm 1}$;
\item In a vicinity of $z_0$, 
\be \nn
g(z)= C(\theta_0) (z-z_0)^{3/2} \E^{\I (\frac{\pi}{4}- \frac{3\theta_0}{2})}(1 + o(1)), \quad C(\theta_0)>0.
\ee
In particular, 
$g(z) =  C(\theta_0) (|z-z_0|)^{3/2}(1+o(1))$ for $z\in\Sigma$.
\end{enumerate}
\end{lemma}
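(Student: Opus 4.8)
The plan is to establish the five properties directly from the integral representation \eqref{g1}, treating the integrand
\[
h(s) := \frac{1}{2}\sqrt{\Big(1-\frac{1}{sz_0}\Big)\Big(1 - \frac{z_0}{s}\Big)} \, (1+s)\, \frac{1}{s}
\]
as the $g$-function's derivative, so that $g(z)=\int_{z_0}^z h(s)\,ds$. The essential first observation is to simplify $h$: multiplying out the two factors under the root gives $\big(1-\frac{1}{sz_0}\big)\big(1-\frac{z_0}{s}\big) = \frac{1}{s^2}(s-z_0)(s-\ol z_0)$, using $z_0 \ol z_0 = |z_0|^2 = 1$ and $z_0 + \ol z_0 = 2\cos\theta_0 = 2(1-2\xi)$. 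Hence
\be\nn
h(s) = \frac{1+s}{2s^2}\sqrt{(s-z_0)(s-\ol z_0)},
\ee
which exhibits the branch points at $z_0$, $\ol z_0$ explicitly and makes the prescribed cut along $\Sigma$ natural. For the normalization $\sqrt{s}>0$ on $s>0$, I would check that the chosen branch makes $h(s)$ real on the positive reals with the intended sign, so that $K(\xi)$ in (a) is real.

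For properties (a) and (c) I would exploit the symmetry $s\mapsto 1/s$. The key computation is to verify $h(1/s)\,d(1/s) = h(s)\,ds$ up to the branch-tracking sign; concretely, substituting and using $z_0\ol z_0=1$ one finds that $h(s)\frac{ds}{?}$ transforms into $-h(s)\,ds$ under $s\mapsto 1/s$ in the appropriate sense, which forces $g(1/z) = -g(z) + \mathrm{const}$. Evaluating the constant at a convenient point (for instance using $g(z_0)=0$ together with $z_0^{-1}=\ol z_0$ on $\T$) yields the constant is zero, giving (c); property (b) then follows since $g(\ol z_0)=g(z_0^{-1})=-g(z_0)=0$. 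For (a), I would expand $h(s)$ as $s\to\infty$: the square root behaves like $s\big(1-\frac{z_0+\ol z_0}{2s}+\cdots\big)$, so $h(s) = \frac12\big(1+\frac{1}{s}+O(s^{-2})\big)$, matching $\Phi'(s)=\frac12(1+s^{-2})+\xi s^{-1}$ to leading order; the difference $\Phi-g$ then has a finite limit $-K(\xi)$, and one can argue $K(\xi)\in\R$ from the reality of $h$ on the real axis.

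Property (d) is where the sign structure matters and is, I expect, the main obstacle. On the arc $\Sigma$ the quantity under the root, $(s-z_0)(s-\ol z_0)$, is negative (the two factors are complex conjugates there), so $\sqrt{(s-z_0)(s-\ol z_0)}$ is purely imaginary and changes sign across the cut: $h_+ = -h_-$. Integrating, $g_+(z) = -g_-(z)$, so I must show the common value is \emph{positive} (real) for $z\in\Sigma\setminus\{z_0^{\pm1}\}$. The plan is to parametrize $\Sigma$ by $z=\E^{\I\theta}$ with $\theta\in(\theta_0, 2\pi-\theta_0)$ and compute $g_+$ as a real integral, checking the integrand has fixed sign; the factor $(1+s)/s^2$ and the imaginary root combine with $ds=\I\E^{\I\theta}d\theta$ to produce a real, sign-definite integrand. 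I would take care that the orientation of $\Sigma$ (from $z_0$ to $\ol z_0$) and the branch choice are consistent with $g_+>0$.

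Finally, property (e) is a local expansion at the branch point $z_0$. Near $z_0$ the root behaves like $\sqrt{(s-z_0)(z_0-\ol z_0)}$, and since $z_0-\ol z_0 = 2\I\sin\theta_0$, writing $s-z_0 = r\E^{\I\arg}$ and integrating $h(s)\sim c\,(s-z_0)^{1/2}$ gives $g(z)\sim \frac{2}{3}c\,(z-z_0)^{3/2}$, producing the exponent $3/2$ and the phase factor $\E^{\I(\pi/4 - 3\theta_0/2)}$ after collecting the arguments of $(1+z_0)$, $z_0^{-2}$, and $(z_0-\ol z_0)^{1/2}$. I would verify the constant $C(\theta_0)>0$ by isolating the modulus; the real-valued restriction $g(z)=C(\theta_0)|z-z_0|^{3/2}(1+o(1))$ on $\Sigma$ then follows because along $\Sigma$ the factor $(z-z_0)^{3/2}$ combines with the phase to give a real positive quantity, consistent with (d).
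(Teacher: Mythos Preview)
Your argument for (b) and (c) is circular. You correctly derive that the differential $h(s)\,ds$ satisfies the right symmetry under $s\mapsto 1/s$, which yields
\[
g(z^{-1}) = -g(z) + C
\]
for some constant $C$. But evaluating at $z=z_0$ gives only $g(\ol z_0) = -g(z_0) + C = C$; this identifies $C = g(\ol z_0)$, it does \emph{not} show $C=0$. You then claim (b) follows from (c), but (c) was obtained by assuming (b). No choice of ``convenient point'' breaks this loop: evaluating at $z=1$ gives $C = 2g(1)$, which is not known a priori either, and $z=-1$ lies on the cut $\Sigma$.

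The paper resolves this by proving (b) first, via a direct computation: parametrizing $s=\E^{\I\theta}$ on the arc from $z_0$ to $\ol z_0$ one finds
\[
g(\ol z_0) = \sqrt{2}\int_{\theta_0-\pi}^{\pi-\theta_0} \sqrt{\cos\theta_0 - \cos(\pi+\alpha)}\,\cos\tfrac{\pi+\alpha}{2}\,d\alpha,
\]
and the integrand is odd in $\alpha$ (the square-root factor is even, $\cos\tfrac{\pi+\alpha}{2}$ is odd), so the integral vanishes. Once (b) is in hand, your change-of-variables argument gives (c) cleanly. Your outlines for (a), (d), and (e) are essentially the same as the paper's; just be aware that (d) also uses the fact that $\re g = 0$ on $\T\setminus\Sigma$ (which you can get by the same parametrization), and that the positivity $g_+>0$ on $\Sigma$ is argued from the sign of $\re g$ on the side from which the $+$ limit is taken, rather than from a direct sign check of the integrand.
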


\begin{proof} We first prove that our choice of $z_0$ yields $\Phi(z)-g(z)=O(1)$ as $z\to\infty.$
To match the asymptotics of $g(z)$ and $\Phi(z)$ in \eqref{Phi} we compute
\[
\frac{d}{dz}g(z) = \frac{1}{2} \sqrt{\Big(1-\frac{1}{zz_0}\Big)\Big(1 - \frac{z_0}{z}\Big)} \frac{1+z}{z} 
	=  \frac{1}{2} \Big( 1 - \frac{1}{2 zz_0}- \frac{z_0}{2z} + \frac{1}{z} \Big) + O\Big(\frac{1}{z^2} \Big),
\]
as $z \to \infty$, and hence
\[
g(z) = \frac{z}{2} + \frac{1}{2} \Big( 1 - \frac{z_0 + z_0^{-1}}{2}\Big) \log z + O(1).
\]	
Now choose 
\[
\frac{z_0 + z_0^{-1}}{2} = 1- 2 \xi,
\]
which implies that $\cos \theta_0 = 1-2 \xi$ as desired. To prove property (b),
substitute $s=\E^{\I \theta}$ and $z_0=\E^{\I \theta_0}$ in \eqref{g1}, then
\begin{align*}
g(\ol{z_0}) &= \I \int_{\theta_0}^{2\pi - \theta_0} \sqrt{(1 - \E^{-\I(\theta_0+\theta)})(\E^{\I \theta}- \E^{\I\theta_0})}\ 
\frac{\E^{\frac{\I \theta}{2}} + \E^{- \frac{\I \theta}{2}}}{2} d \theta \\
&= \I \sqrt{2} \int_{\theta_0}^{2\pi - \theta_0} \sqrt{\cos \theta - \cos \theta_0} \cos\tfrac{\theta}{2} d\theta. 
\end{align*}
For $\theta \in [\theta_0, 2\pi - \theta_0]$, we have that $f(\theta)=\cos \theta_0 - \cos \theta>0$ and $f$ is even with respect to $\alpha$, $f(\pi - \alpha)=f(\pi + \alpha)$. Moreover, $\cos(\tfrac{\pi + \alpha}{2})$ is odd with respect to 
$\alpha$, $\cos(\tfrac{\pi - \alpha}{2}) = - \cos(\tfrac{\pi + \alpha}{2})$. Hence the substitution 
$\theta=\pi + \alpha$ with $-\pi + \theta_0 \leq \alpha < \pi - \theta_0$ yields 
\[
g(\ol{z_0}) = \sqrt{2} \int_{\theta_0 - \pi}^{\pi - \theta_0} \sqrt{f(\pi + \alpha)} \cos(\tfrac{\pi + \alpha}{2})d\alpha = 0.
\]
Now it is straightforward to see (c), because the integrand can be represented as
\be\label{impl}
\frac{d}{ds}g(s) =  \frac{1}{s}\sqrt{\frac{s+s^{-1}}{2}-\frac{z_0+z_0^{-1}}{2}}\,\sqrt{\frac{s+s^{-1}}{2}+1} =: \frac{h(s)}{s}.
\ee
Since $h(s^{-1})=h(s)$, one obtains by replacing $s=t^{-1}$
\[
g(z^{-1})= \int_{z_0}^{z^{-1}} \frac{h(s) ds}{s} =  - \int_{\ol {z_0}}^z \frac{h(t) dt}{t} = - \int_{z_0}^z \frac{h(t) dt}{t}
- g(\ol{z_0}) = - g(z).
\]
For property (d), note that due to their equal asymptotic behavior, the signature table for $g$ as $z \to 0$ or 
$z \to \infty$ is the same as the signature table for $\Phi$ (see Fig.~\ref{fig:g}). 
The line $\T \setminus \Sigma$ corresponds to $\re g=0$.
Indeed, if $z_1 = \E^{\I \theta_1}$ and $\re z_1 > \re z_0$, then
\[
g(z_1) = \I \sqrt{2} \int_{\theta_0}^{\theta_1} \sqrt{\cos \theta - \cos \theta_0} \cos(\tfrac{\theta}{2}) d\theta \in \I \R.
\] 
The function $g(z)$ has a jump along the contour $\Sigma$, but the limiting values are real (compare with the proof of (b)). Thus $\re g_+ = g_+ > 0$ (because this limit is taken from the domain where $\re g > 0$). Respectively, 
$\re g_- = g_- < 0$ and $g_+ = - g_-$.

Now we are ready to finish the proof of property (a). Since $g(z)$ satisfies (c) we have $g(1)=0$. By use of \eqref{impl} and \eqref{spec5}, and taking into account that 
\[
\frac{dz}{z}=-\frac{d\la}{\sqrt{\la^2-1}},\quad \Phi(z)= -\int_1^{\la(z)} \frac{x+\xi}{\sqrt{x^2 -1}}dx,
\]
we obtain for $z\to +0$, that is when $\sqrt{\la^2 - 1}>0$ as $\la>1$, the asymptotic behavior
\be \label{impimp}
\Phi(z) - g(z) = K(\xi) + k(\xi)z + O(z^{2}), \quad \mbox{ as $z\to 0$}.
\ee
Here
\be \label{Kxi}
 K(\xi)=\int_1^\infty \frac{\sqrt {(x - 1+2\xi)(x+1)} - x -\xi}{\sqrt{x^2 - 1}}\, dx \in \R, \quad  k(\xi)=1-2\xi +\xi^2.\ee
Differentiating \eqref{Kxi} we also get 
\be\label{differ}
\frac{d}{d\xi} K(\xi) =-\log\xi.
\ee

To prove (e) we decompose \eqref{g1} in a vicinity of $z_0$ and integrate. 
Since 
$\arg (z-z_0) =( -\frac{3 \pi}{2} + \theta_0)(1 +o(1))$
for $z\in\Sigma$ in a small vicinity of $z_0$, then 
$
\arg (z-z_0)^{3/2}+ \frac{\pi}{4} -\frac{3\theta_0}{2}= - 2 \pi (1+o(1)),
$
which implies the second claim of (e). 
\end{proof}
The signature table for $g(z)$ is given in Fig.~\ref{fig:g}.
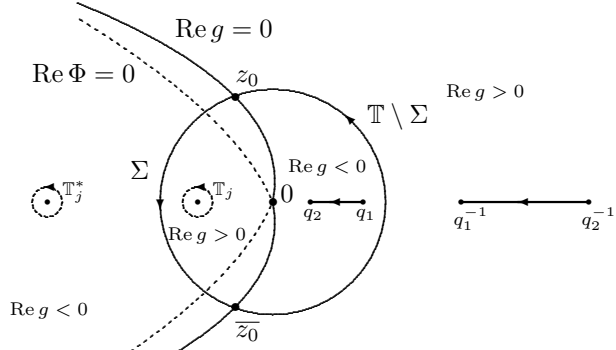
\begin{figure}[ht]
\begin{picture}(6,5.3)
	
	\put(3,2.5){\bigcircle{3}}
	\put(3,2.5){\circle*{0.09}}
	\put(3.1,2.5){$0$}	
	
	\put(2.5,3.9){\circle*{0.09}}
	\put(2.5,4.1){$z_0$}
	\put(2.5,1.1){\circle*{0.09}}
	\put(2.5,0.7){$\ol {z_0}$}

	\put(1.1,2.8){$\Sigma$}
	\put(4.25,3.55){$\T \setminus \Sigma$}
    \put(1.5,2.5){\vector(0,-1){0.1}}
    \put(4.06,3.56){\vector(-1,1){0.1}}

\linethickness{0.3mm}
\put(3.5,2.5){\line(1, 0){0.7}}
\put(3.8,2.5){\vector(-1,0){0.05}}
\put(3.5,2.5){\circle*{0.07}}
\put(3.4,2.3){\scriptsize {$q_2$}}
\put(4.2,2.5){\circle*{0.07}}
\put(4.1,2.3){\scriptsize {$q_1$}}

\put(5.5,2.5){\line(1, 0){1.7}}
\put(6.3,2.5){\vector(-1,0){0.05}}
\put(5.5,2.5){\circle*{0.07}}
\put(5.4,2.2){\scriptsize {$q_1^{-1}$}}
\put(7.2,2.5){\circle*{0.07}}
\put(7.1,2.2){\scriptsize {$q_2^{-1}$}}

	\linethickness{0.2mm}
	\curve(3,2.5, 2.5,3.9, 0.4,5.15)
	\curve(3,2.5, 2.5,1.1, 0.4,-0.15)
	
\curvedashes{0.04,0.04}	
\put(2,2.5){\bigcircle{0.4}}
\put(2,2.5){\circle*{0.07}}
\put(2.2,2.6){\scriptsize $\T_j$}
\put(2,2.7){\vector(-1,0){0.05}}	

\put(0,2.5){\bigcircle{0.4}}
\put(0,2.5){\circle*{0.07}}
\put(0.2,2.6){\scriptsize $\T_j^*$}
\put(0,2.7){\vector(-1,0){0.05}}

\curvedashes{0.05,0.05}	
	\curve(3,2.5, 2.1,3.7, 0.4,4.95)
	\curve(3,2.5, 2.1,1.3, 0.4,0.05)

	\put(1.7,4.7){$\re g = 0$}
	\put(-0.2,4.1){$\re \Phi = 0$}  

	\put(5.3,3.9){\scriptsize {$\re g > 0$}}
	\put(3.2,2.9){\scriptsize {$\re g < 0$}}
	\put(1.6,2){\scriptsize {$\re g > 0$}}
	\put(-0.5,1){\scriptsize {$\re g < 0$}}

\end{picture}
\caption{Signature table for $g(z)$.}\label{fig:g}
\end{figure} 
With this description of the $g$-function we introduce the function 
 $d(z) =\E^{t(\Phi(z) - g(z))}$. It satisfies the conditions of Lemma \ref{lem:conjug}. 
 Let $m^{\mathrm{ini}}(z)$ be the solution of the RH problem I--III. Set
\be\nn
m^{(1)}(z)=m^{\mathrm{ini}}(z)[ d(z)]^{-\si_3}, 
\ee
then this vector solves the jump problem $m_+^{(1)}(z)=m_-^{(1)}(z)v^{(1)}(z)$ with
\[
v^{(1)}(z)=\left\{
\begin{array}{ll}
\begin{pmatrix}
0 & - \ol{\mathcal R(z)} \E^{- 2 t g(z)} \\
\mathcal R(z) \E^{2 t g(z)} & 1
\end{pmatrix}, & \quad z \in \T \setminus \Sigma,\\
\begin{pmatrix}
0 & - \ol{\mathcal R(z)} \\
\mathcal R(z)  & \E^{-2 t g_+(z)}
\end{pmatrix}, & \quad z \in \Sigma,\\
E(z),  & \quad z \in \Xi, 
\end{array}\right.
\]
where the following notations have been introduced:
\begin{align} \nn
\mathcal R(z) &:=R(z)P^{-2}(z), \qquad \Xi:=I\cup I^*\cup \T^\delta,\\
\label{defB11}
 E(z)&:= \begin{cases} 
\begin{pmatrix}
1 & 0 \\
P^{-2}(z)\chi(z) \E^{2t g (z)} & 1
\end{pmatrix}, & \quad z \in I, \\
\sigma_1(E(z^{-1}))^{-1}\sigma_1, & \quad z\in I^*, \\
[d(z)]^{\si_3}B(z)[d(z)]^{-\si_3}, &  \quad z\in \T^\delta.
\end{cases}
\end{align}
The matrix $B(z)$ was defined in \eqref{defB}.
Note that in the non-resonant case for $z=q_1$ and $z=q_2$,
\be\nn
\sup_{z\in I\cup I^*}\|E(z) - \id\|=\|E(z) - \id\|_{L^\infty(I\cup I^*)}\leq C\E^{-t|g(q_1)|}.
\ee
To obtain an analogous estimate on $\T^{\delta}$, we have to adjust the value for $\delta$. 
Denote  
\be\nn\label{defJ}
\inf_{j}\inf\{|\Phi(z_j)|, |g(z_j)|\}=J>0,\quad j=1,\dots,N.
\ee 
Choose $\delta>0$ so small that 
\be\nn
\sup_j \sup_{z \in \T_j}|g(z) - g(z_j)|<\frac{J}{4}, \quad \sup_j \sup_{z \in \T_j}|\Phi(z) - \Phi(z_j)|<\frac{J}{4}.
\ee
Then 
\be\nn
\sup_{z\in \T^\delta}\|E(z)-\id\|=\|E(z) - \id\|_{L^\infty(\T^\delta)}\leq C\E^{-\frac{t J}{2}}.
\ee

\vskip 1mm

{\it Step 2}: On $\T \setminus \Sigma$, the jump matrix can be factorized using the standard upper-lower factorization (\cite{dkkz, KTb}). Let $\mathcal C$ be a contour close to the complementary arc $\T\setminus \Sigma$ with endpoints 
$z_0$ and $\ol z_0$ and clockwise orientation. Let $\mathcal C^*$ be its image under the map $z\mapsto z^{-1}$, oriented clockwise as well.
\begin{figure}[ht]
\begin{picture}(6,5)
	
	\put(3,2.5){\bigcircle{3}}
	\put(3,2.5){\circle*{0.09}}
	\put(3.1,2.5){$0$}	
	
	\put(2.5,3.9){\circle*{0.09}}
	\put(2.3,3.6){$z_0$}
	
	\put(2.5,1.1){\circle*{0.09}}
	\put(2.2,1.3){$\ol {z_0}$}

\put(1.05,2.4){$\Sigma$}
  \put(1.5,2.5){\vector(0,-1){0.1}}

	\linethickness{0.2mm}
	\curve(3,2.5, 2.5,3.9, 1,5)
	\curve(3,2.5, 2.5,1.1, 1,0)
	\put(1.7,4.7){$\re g = 0$}


    \curvedashes{0.04,0.04}
	\curve(3,3.6, 3.77,3.27, 4.1,2.5, 3.77,1.73, 3,1.4)
	\curve(3,3.6, 2.75,3.7, 2.5,3.9)
	\curve(3,1.4, 2.75,1.3, 2.5,1.1)
	\put(3.77,3.27){\vector(1,-1){0.1}}

	\curve(3,4.4, 4.34,3.84, 4.9,2.5, 4.34,1.16, 3,0.6)
	\curve(3,4.4, 2.55,4.25, 2.5,3.9)
	\curve(3,0.6, 2.55,0.75, 2.5,1.1)
	\put(4.34,3.84){\vector(1,-1){0.1}}

    \put(4.15,2.1){$\Omega$}
    \put(3.42,3){$\mathcal C$}
    \put(4.4,1.7){$\Omega^*$}
    \put(4.55,3.8){$\mathcal C^*$}

	\linethickness{0.3mm}
	\put(3.5,2.5){\line(1, 0){0.4}}
	\put(3.65,2.5){\vector(-1,0){0.05}}
	\put(3.5,2.5){\circle*{0.07}}
	\put(3.9,2.5){\circle*{0.07}}
	\put(5.5,2.5){\line(1, 0){1.7}}
	\put(6.3,2.5){\vector(-1,0){0.05}}
	\put(5.5,2.5){\circle*{0.07}}
	\put(5.4,2.2){\scriptsize {$q_1^{-1}$}}
	\put(7.2,2.5){\circle*{0.07}}
	\put(7.1,2.2){\scriptsize {$q_2^{-1}$}}

\end{picture}
\caption{Contour deformation of Step 2.}\label{fig:contourC}
\end{figure}
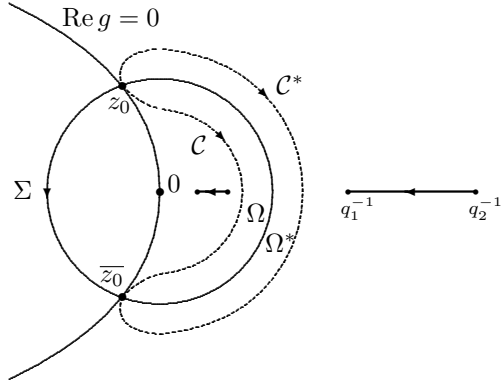
We denote the regions adjacent to these contours by $\Omega$ and $\Omega^*$ as in Fig.~\ref{fig:contourC}, and set
\be \nn
W(z):= \begin{cases}
	\begin{pmatrix}
	1 & 0 \\
	- \mathcal R(z) \E^{2 t g(z)} & 1
	\end{pmatrix}, & \quad z \in \Omega,\\  
	\sigma_1W(z^{-1})\sigma_1, &  \quad z\in\Omega^*.
	\end{cases}
	\ee
Redefine $m^{(1)}$ inside $\Omega$ and $\Omega^*$ by 
$m^{(2)}(z)=m^{(1)}(z) W(z).$  Then the new vector does not have a jump along $\T\setminus \Sigma$, 
and satisfies $m_+^{(2)}(z)=m_-^{(2)}(z)v^{(2)}(z)$ with
\be\nn
v^{(2)}(z)=\begin{cases} v^{(1)}(z), & \quad z \in \Sigma \cup \Xi,\\
W(z), & \quad z\in \mathcal C,\\
	\sigma_1 (W(z^{-1}))^{-1}\sigma_1, &  \quad z\in \mathcal C^*.
\end{cases}
\ee
The symmetry and normalization conditions are preserved in this deformation step.
\vskip 2mm
Before we perform the next conjugation step, we have to study in detail the solution of the following scalar conjugation problem: {\it find a holomorphic function $\tilde d(z)$ on $\C \setminus \Sigma$, such that }
\begin{align} \label{scp}
& \tilde d_+(z) \tilde d_-(z) = \mathcal R(z) \mathcal R^{-1}(-1), \quad z \in \Sigma,  \\
& \mbox{(i)}\ \tilde d(z^{-1}) = \tilde d^{-1}(z), \quad z \in \C \setminus \Sigma; \quad \mbox{(ii)}\ \tilde d(\infty)>0.\label{symmd}
\end{align}

\begin{remark} 
As for any multiplicative scalar jump problem with non-vanishing jump function on a contour  in  $\C$, one can find its solution  via the associated additive jump problem and the usual Cauchy integral. However, the representation via the usual Cauchy integral requires a prescribed (and known) behavior of the solutions at $\infty$. On the other hand, this representation cannot provide \eqref{symmd}, unless the jump functions are not even on $\Sigma$. For example, condition \eqref{symmd}, (i), implies $\log \tilde d(1)=0$, which cannot be obtained by the usual Cauchy integral. 
That is why we use the Cauchy integral with kernel vanishing at $z=1$,
\be\label{ome}
\Omega(z,s)=\frac{1}{2}\left(\frac{s+z}{s-z} - \frac{s+1}{s-1} \right).
\ee
\end{remark}

In order to solve the conjugation problem \eqref{scp}--\eqref{symmd}, we first solve an auxiliary conjugation problem: find a holomorphic function $F(z)$ in $\C \setminus \Sigma$,  bounded as $z\to\infty$ and such that 
\be\label{aux}F_+(z) = -F_-(z), \ 
z \in \Sigma; \quad F(z^{-1})=-F(z),\ z \in \C \setminus \Sigma;\quad F(\infty)\in\R.
\ee
Since there are two associated additive jump problems,
$
\log F_+(z) = \log F_-(z) \pm \I \pi$, $z \in \Sigma$,
and since
\[
 \frac{\I \pi}{2 \pi \I} \int_{z_0}^{\ol{z_0}} \Omega(z,s)\frac{ds}{s}= 
\log \left(\frac{z_0 z - 1}{z_0-z} \right)^{1/2},
\]
the solution of the jump problem in \eqref{aux} can be given by 
\be \nn
F(z)= \frac{1}{2 \I} \left(\left(\frac{z_0 z - 1}{z_0-z} \right)^{1/2} - \left(\frac{z_0-z}{z_0 z - 1}\right)^{1/2}\right).  
\ee
The symmetry $F(z^{-1})=-F(z)$ is evident from here. It turns out that this symmetry implies that $F_+(s)$ 
is an even function for $s\in\Sigma$. Moreover, 
$F_+(s)$ is real-valued as $s=\E^{\I\theta}$ and $\theta\neq\theta_0$,
\be \label{bes}
F_+(\E^{\I \theta}) = \frac{1}{2 \I} \Bigg(\left(\frac{\sin\tfrac{\theta_0 + \theta}{2}}
{\sin\tfrac{\theta_0-\theta}{2}} \right)^{1/2} - \left(\frac{\sin\tfrac{\theta_0 - \theta}{2}}
{\sin\tfrac{\theta_0+\theta}{2}} \right)^{1/2}\Bigg)=\sqrt{2}\frac{\sin\frac{\theta}{2}\cos\frac{\theta_0}{2}}{\sqrt{\cos\theta_0 - \cos\theta}}\in\R,
\ee 
because  $\cos\theta<\cos \theta_0$ on $\Sigma$.  
On the other hand,
\be\label{Fdef1} 
F(z)=\frac{1}{2\I}\frac{(z_0 + 1)(z-1)}{\sqrt{(z_0 - z)(z_0 z - 1)}}
=\cos\tfrac{\theta_0}{2} \frac{(z-1)}{\sqrt{(z-z_0)(z - \ol z_0 )}},
\ee
and in particular, $F(1)=0$ and $F(\infty)=-F(0)=\cos\frac{\theta_0}{2}\in \R$.

Now assume that $\tilde d(z)$ solves \eqref{scp}--\eqref{symmd} and introduce the function $f(z):=F(z) \log \tilde d(z)$.
Then \eqref{scp}, \eqref{symmd}, and \eqref{aux} imply that this function should solve the jump problem 
\be\label{solf}
f_+(z)=
 f_-(z) + h(z), \quad  z \in \Sigma, \quad  \mbox{where}\  h(z):=F_+(z) \log \tfrac{\mathcal R(z)}{\mathcal R(-1)}.
\ee 
Moreover, $f(z^{-1})=f(z)$ should hold for $z\in\C\setminus\Sigma$ and $f(\infty)\in \R$. The jump \eqref{solf} can be satisfied by
\be\label{deffi}
f(z) = \frac{1}{4 \pi \I}\int_{z_0}^{\ol{z_0}} h(s) \left(\frac{s+z}{s-z} - \frac{s+1}{s-1} \right)  \frac{ds}{s}.
\ee
One has to check that the other two conditions are satisfied too. The function  $\log \frac{\mathcal R(z)}{\mathcal R(-1)}$ is an odd function. Indeed, recall that the reflection coefficient $R(z)$ is a continuous function which satisfies $R(z^{-1})=R^{-1}(z)$, $|R(z)|=1$, and so does $P^2(z)$, which implies  $\mathcal R(z^{-1})=\mathcal 
R^{-1}(z)$. Since $F_+(z)$ is an even function, $h(z)$ is odd on $\Sigma$, and the required evenness of 
$f$ can be directly verified from \eqref{deffi}. 

To check that $f(\infty)\in \R$ recall that $|\mathcal R(z)|=1$ yields $h(z)\in\I \R$. When $z\to\infty$ the first integrand in \eqref{deffi}  vanishes due to oddness of $h$.
For the second we obtain
\[
f(\infty)= - \frac{1}{4 \pi \I}\int_{z_0}^{\ol{z_0}} h(s) \frac{s+1}{s-1} \frac{ds}{s} = 
- \frac{1}{4 \pi \I}\int_{\theta_0}^{2 \pi - \theta_0} h(\theta) \cot(\tfrac{\theta}{2}) d\theta \in \R. 
\]
Since $1\notin\Sigma$, the function $f(z)$ is holomorphic in a vicinity of $z=1$ and has a zero at least of first order there. Hence $F^{-1}(z)f(z)$ is well defined, it satisfies the required additive jump problem for $\log\tilde d(z)$, the symmetry $F^{-1}(z)f(z)=-F^{-1}(z^{-1})f(z^{-1})$, and $F^{-1}(\infty)f(\infty)\in\R$. Thus, $\tilde d(z) = \E^{F^{-1}(z)f(z)}$ solves the conjugation problem \eqref{scp}--\eqref{symmd} and combining 
\eqref{ome}, \eqref{bes}--\eqref{deffi} we obtain
\be\label{12}\tilde d(z)= \exp \left(
\frac{\I\sqrt{(z-z_0 )(z-\ol z_0)}}{\sqrt 2 \pi(z-1)}
\int_{\theta_0}^{2\pi-\theta_0} \arg\frac{\mathcal R(\E^{\I\theta})}{ \mathcal R(-1)}\,\frac{\Omega(z,\E^{\I\theta})\sin\frac{\theta}{2}}{\sqrt{\cos\theta_0 -\cos\theta}} d\theta\right).
\ee
Moreover, taking into account that 
\[
\frac{\Omega(z,s)}{z-1}=\frac{2s}{(s-z)(s-1)},
\] 
and using the first equality in \eqref{Fdef1} we obtain another representation for $\tilde d(z)$,
\be\label{13}
\tilde d(z)=\exp \Bigg( \frac{\sqrt{(z_0 -z)(z_0z-1)}}{2 \pi\I} 
\int_{z_0}^{\ol{z_0}} \frac{ \log\frac{ \mathcal R(s)}{\mathcal R(-1)}\,ds}{[\sqrt{(z_0 -s)(z_0s-1)}]_+(s-z)}\Bigg).
\ee

\begin{lemma}
The solution of the conjugation problem \eqref{scp}--\eqref{symmd} given by \eqref{12} has the following asymptotic behavior as $z\to \infty$
\be\label{asymp}\tilde d(z)=\E^{A(\xi) +\frac{B(\xi)}{z} + O(z^{-2})},
\ee
where
\begin{align}\label{athet}
A(\xi)&=\frac{1}{\sqrt 2\pi} \int_{\theta_0}^{2\pi-\theta_0}\sqrt{\cos\theta_0 -\cos\theta}\frac{2u^\prime(\theta)\sin\frac{\theta}{2} -u(\theta)\cos \frac{\theta}{2}}{4\sin^2\frac{\theta}{2}}d\theta, \\
\label{bthet} B(\xi)& =2 \xi A(\xi) - \frac{1}{\sqrt 2\pi} \int_{\theta_0}^{2\pi-\theta_0}\sqrt{\cos\theta_0 -\cos\theta}\left(2 u^\prime(\theta)
\sin\tfrac{\theta}{2} + u(\theta)\cos\tfrac{\theta}{2}\right)d\theta,
\end{align}
and $u(\theta):=\arg\left(\frac{\mathcal R(\E^{\I\theta})}{\mathcal R(-1)}\right)$.
\end{lemma}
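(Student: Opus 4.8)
The plan is to expand the exponent in the representation \eqref{12} for $\tilde d(z)$ to order $z^{-1}$ as $z\to\infty$, and then to read off $A(\xi)$ and $B(\xi)$ after reducing the resulting complex integrals to real ones by means of the available symmetries. Write $\tilde d(z)=\E^{E(z)}$ with $E(z)=\frac{\I}{\sqrt2\pi}\,G(z)\,\mathcal I(z)$, where $G(z)=\frac{\sqrt{(z-z_0)(z-\ol{z_0})}}{z-1}$ is the algebraic prefactor and $\mathcal I(z)=\int_{\theta_0}^{2\pi-\theta_0}u(\theta)\,\Omega(z,\E^{\I\theta})\,\frac{\sin\frac{\theta}{2}}{\sqrt{\cos\theta_0-\cos\theta}}\,d\theta$ with $\Omega$ as in \eqref{ome} and $u(\theta)=\arg\frac{\mathcal R(\E^{\I\theta})}{\mathcal R(-1)}$. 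First I would expand $G$: since $|z_0|=1$ and $z_0+\ol{z_0}=2\cos\theta_0=2(1-2\xi)$ by \eqref{defz0}, one has $(z-z_0)(z-\ol{z_0})=z^2-2(1-2\xi)z+1$, whence $G(z)=1+\frac{2\xi}{z}+O(z^{-2})$.

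Second, I would expand the Cauchy-type kernel. Setting $w=z^{-1}$ and $s=\E^{\I\theta}$, the geometric expansion of $\frac{s+z}{s-z}=-\frac{1+sw}{1-sw}$ gives $\Omega(z,s)=\Omega_0(s)+\Omega_1(s)\,w+O(w^2)$ with $\Omega_0(s)=-\frac{s}{s-1}$ and $\Omega_1(s)=-s$ (the term $\frac{s+1}{s-1}$ in \eqref{ome} being $z$-independent). Inserting this into $\mathcal I(z)$ and interchanging limit and integral — legitimate because the only singularities sit at the endpoints $\theta_0,2\pi-\theta_0$, where the factor $(\cos\theta_0-\cos\theta)^{-1/2}$ is integrable and the expansion is uniform on compact subsets of the open interval — yields $\mathcal I(z)=\mathcal I_0+\mathcal I_1 w+O(w^2)$ with $\mathcal I_k=\int_{\theta_0}^{2\pi-\theta_0}u(\theta)\,\Omega_k(\E^{\I\theta})\,\frac{\sin\frac{\theta}{2}}{\sqrt{\cos\theta_0-\cos\theta}}\,d\theta$. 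Multiplying the two expansions produces $E(z)=A(\xi)+B(\xi)z^{-1}+O(z^{-2})$ with $A(\xi)=\frac{\I}{\sqrt2\pi}\mathcal I_0$ and $B(\xi)=2\xi A(\xi)+\frac{\I}{\sqrt2\pi}\mathcal I_1$, which already reproduces the structural relation between \eqref{athet} and \eqref{bthet}.

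Third comes the reduction to real integrals. The key symmetry is $u(2\pi-\theta)=-u(\theta)$, which follows from the oddness of $\log\frac{\mathcal R(z)}{\mathcal R(-1)}$ (observed in the construction of \eqref{12}) together with the reality of $\mathcal R(-1)$, itself a consequence of $R(-1)=\pm1$ and of $P$ being real at $z=-1$. Using $\frac{\E^{\I\theta}}{\E^{\I\theta}-1}=\frac{\E^{\I\theta/2}}{2\I\sin\frac{\theta}{2}}$, the $\mathcal I_0$-integrand collapses to $\frac{\I}{2}\E^{\I\theta/2}$ times the weight; discarding the parts that are odd about $\theta=\pi$ (there $u$ and $\cos\frac{\theta}{2}$ are odd while $\sin\frac{\theta}{2}$ is even) leaves $A(\xi)=-\frac{1}{2\sqrt2\pi}\int_{\theta_0}^{2\pi-\theta_0}\frac{u(\theta)\cos\frac{\theta}{2}}{\sqrt{\cos\theta_0-\cos\theta}}\,d\theta$. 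An entirely analogous reduction of $\mathcal I_1$, now retaining the part even about $\pi$ and using $\sin\theta=2\sin\frac{\theta}{2}\cos\frac{\theta}{2}$, gives $\frac{\I}{\sqrt2\pi}\mathcal I_1=\frac{2}{\sqrt2\pi}\int_{\theta_0}^{2\pi-\theta_0}\frac{u(\theta)\sin^2\frac{\theta}{2}\cos\frac{\theta}{2}}{\sqrt{\cos\theta_0-\cos\theta}}\,d\theta$.

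Finally, I would convert these two real integrals into the stated forms \eqref{athet}, \eqref{bthet} by integration by parts. With $P(\theta)=\cos\theta_0-\cos\theta$, differentiating $\big(\sqrt P\,\sin\frac{\theta}{2}\big)$ and $\big(\tfrac{\sqrt P}{\sin\frac{\theta}{2}}\big)$ produces exactly the singular terms $\frac{u\cos\frac{\theta}{2}}{\sqrt P}$ and $\frac{u\sin^2\frac{\theta}{2}\cos\frac{\theta}{2}}{\sqrt P}$ above; since $\sqrt P$ vanishes at both endpoints while $\sin\frac{\theta}{2}\neq0$ on $[\theta_0,2\pi-\theta_0]$, every boundary term drops out, and the auxiliary terms carrying the non-singular weight (all proportional to $\sqrt P$) cancel pairwise, leaving precisely the combinations $2u'(\theta)\sin\frac{\theta}{2}\mp u(\theta)\cos\frac{\theta}{2}$ weighted by $\sqrt P$. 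I expect the main obstacle to be exactly this bookkeeping: justifying the termwise integration against the endpoint-singular weight and organizing the two integrations by parts so that the boundary contributions vanish and the auxiliary integrals cancel. The expansions themselves are routine once the symmetry $u(2\pi-\theta)=-u(\theta)$ is in hand.
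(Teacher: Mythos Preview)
Your approach is correct and essentially the same as the paper's: both expand the algebraic prefactor $G(z)=1+2\xi z^{-1}+O(z^{-2})$, expand the kernel $\Omega(z,\E^{\I\theta})$ to first order in $z^{-1}$, exploit the parity $u(2\pi-\theta)=-u(\theta)$ to drop the imaginary contributions, and then integrate by parts against $\tfrac{d}{d\theta}\sqrt{\cos\theta_0-\cos\theta}$ to reach \eqref{athet}--\eqref{bthet}. One wording issue: in your final paragraph the auxiliary $\sqrt{P}$-terms produced by differentiating $\sqrt{P}\sin\tfrac{\theta}{2}$ and $\sqrt{P}/\sin\tfrac{\theta}{2}$ do \emph{not} cancel pairwise---they are precisely what combine with the $u'$-terms from integration by parts to yield $2u'\sin\tfrac{\theta}{2}\mp u\cos\tfrac{\theta}{2}$; so the bookkeeping you flag as the main obstacle is slightly different from what you describe, though the outcome is the same.
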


\begin{remark}
The function $\mathcal R$, which is in fact a function of the spectral parameter $\la$, depends on $\theta$ via $\mathcal R(\cos\theta)$, therefore the derivative of $u(\theta)$ is a real valued function.
\end{remark}

\begin{proof}
The function $u(\theta)$ is an odd function of $\theta$ on the interval of integration in the sense that  $u(\frac{\pi}{2}-\alpha)=-u(\frac{\pi}{2}+\alpha)$. In the same sense, 
$\sin\frac{\theta}{2} (\cos\theta_0 -\cos\theta)^{-1/2}$ is an even function. 
Taking this into account as well as \eqref{ome}, we have for $z\to\infty$
\begin{align*}
\I\frac{u(\theta)\Omega(z,\E^{\I\theta})\sin\frac{\theta}{2}}{\sqrt{\cos\theta_0 -\cos\theta}}&=-\I
\left(\frac{\E^{\I\theta}}{\E^{\I\theta}-1}+\frac{\E^{\I\theta}}{z}\right)\frac{u(\theta)\sin\frac{\theta}{2}}{\sqrt{\cos\theta_0 -\cos\theta}} +O(z^{-2})\\
&=\bigg(- \frac{ u(\theta)}{2\sin\frac{\theta}{2}} + \frac{2 u(\theta)\sin\frac{\theta}{2}}{z}\bigg)\frac{d}{d\theta}\sqrt{\cos\theta_0 -\cos\theta} \\
&\quad - \I \frac{u(\theta)\sin\frac{\theta}{2}}{2\sqrt{\cos\theta_0 -\cos\theta}} +\frac{\tilde u(\theta)}{z}
+O(z^{-2}),
\end{align*}
where 
\[
\tilde u(\theta)=\frac{-\I u(\theta)\cos\theta\sin\frac{\theta}{2}}{\sqrt{\cos\theta_0 -\cos\theta}}
\]
is an odd function, $\tilde u(\frac{\pi}{2}-\alpha)=-\tilde u(\frac{\pi}{2}+\alpha)$.
Integration by parts then yields
\be\label{as7}\I\int_{\theta_0}^{2\pi-\theta_0}\frac{u(\theta)\Omega(z,\E^{\I\theta})\sin\frac{\theta}{2}}{\sqrt{\cos\theta_0 -\cos\theta}}d\theta=
\int_{\theta_0}^{2\pi-\theta_0}\sqrt{\cos\theta_0 -\cos\theta}\frac{d}{d\theta}\bigg(
\frac{ u(\theta)}{2\sin\frac{\theta}{2}} - \frac{2 u(\theta)\sin\frac{\theta}{2}}{z}\bigg)d\theta 
\ee 
plus the term of order $z^{-2}$.
On the other hand,
\[
\frac{\sqrt{(z-z_0 )(z-\ol z_0)}}{\sqrt 2 \pi(z-1)}=\frac{1}{\sqrt 2 \pi }\bigg(1 +\frac{1-\cos\theta_0}{z}\bigg) + O(z^{-2}), \quad z\to\infty.
\]
Combining this with \eqref{as7} yields \eqref{athet} and \eqref{bthet}.
\end{proof}

\begin{lemma}\label{lem3.4}
The function $\tilde d(z)$ satisfying \eqref{scp}--\eqref{symmd} has the following asymptotic behavior 
in a vicinity of $z_0$, 
\be \label{dlim}
\tilde d^{-2}(z) \mathcal R(z)= \mathcal R(-1) +O(\sqrt{z-z_0}), \ z \notin \Sigma,
\quad \frac{ \tilde d_+(z)}{ \tilde d_-(z)} = 1+O(\sqrt{z-z_0}), \ z \in \Sigma.
\ee	
\end{lemma}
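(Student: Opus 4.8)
The plan is to read both limits directly off the closed form \eqref{13}. Write $w(z):=\sqrt{(z_0-z)(z_0z-1)}$ for the branch fixed by the cut along $\Sigma$, so that
$\log\tilde d(z)=\frac{w(z)}{2\pi\I}\int_{z_0}^{\ol{z_0}}\frac{\log(\mathcal R(s)/\mathcal R(-1))}{w_+(s)(s-z)}\,ds$.
The decisive structural fact is that $w(z)$ vanishes like $\sqrt{z-z_0}$ at the endpoint $z_0$ (indeed $z_0z-1\to z_0^2-1\neq0$ there), so the whole question reduces to controlling the Cauchy integral as $z\to z_0$. I would first isolate the constant part of the density by splitting $\log\frac{\mathcal R(s)}{\mathcal R(-1)}=\log\frac{\mathcal R(z_0)}{\mathcal R(-1)}+\log\frac{\mathcal R(s)}{\mathcal R(z_0)}$.

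For the constant piece I would use that $1/w(z)$ is sectionally analytic off $\Sigma$, vanishes at $\infty$, and has jump $(1/w)_+-(1/w)_-=2/w_+$ (since $w_-=-w_+$). The Sokhotski--Plemelj reconstruction then gives the explicit identity $\frac{1}{2\pi\I}\int_{z_0}^{\ol{z_0}}\frac{ds}{w_+(s)(s-z)}=\frac{1}{2w(z)}$. Multiplying by $w(z)$, the constant piece contributes exactly $\tfrac12\log\frac{\mathcal R(z_0)}{\mathcal R(-1)}$, independent of $z$ and carrying no jump across $\Sigma$.

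It then remains to treat the leftover density $\frac{\log(\mathcal R(s)/\mathcal R(z_0))}{w_+(s)}$. Since $\mathcal R=RP^{-2}$ is analytic and nonvanishing near $z_0\in\T$ (the decay hypothesis \eqref{decay} continues $R$ into the annulus $\E^{-\nu}<|z|<1$, while $P$ is rational), the numerator is $O(s-z_0)$, and after dividing by $w_+(s)=O(\sqrt{s-z_0})$ the density is Hölder of exponent $1/2$ and vanishes at $z_0$. By the standard Muskhelishvili endpoint analysis this makes the corresponding Cauchy integral $\tilde\Psi(z)$ tend to a finite limit as $z\to z_0$ off $\Sigma$, and likewise makes the principal-value sum $\tilde\Psi_+(z)+\tilde\Psi_-(z)$ bounded on $\Sigma$ near $z_0$. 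Multiplying again by $w(z)=O(\sqrt{z-z_0})$ yields $\log\tilde d(z)=\tfrac12\log\frac{\mathcal R(z_0)}{\mathcal R(-1)}+O(\sqrt{z-z_0})$, i.e. $\tilde d^{2}(z)=\frac{\mathcal R(z_0)}{\mathcal R(-1)}+O(\sqrt{z-z_0})$; since $\mathcal R(z)=\mathcal R(z_0)+O(z-z_0)$ this gives the first relation in \eqref{dlim}. For the second, the jump of $\log\tilde d$ across $\Sigma$ equals $w_+(z)\big(\tilde\Psi_+(z)+\tilde\Psi_-(z)\big)=O(\sqrt{z-z_0})$, the constant part contributing nothing, so $\tilde d_+/\tilde d_-=1+O(\sqrt{z-z_0})$.

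The main obstacle is precisely this endpoint behavior: one must verify that, after peeling off the explicitly computable constant part, the remaining Cauchy integral stays bounded up to $z_0$ rather than inheriting a $1/\sqrt{z-z_0}$ singularity. This hinges on the cancellation $\log\frac{\mathcal R(s)}{\mathcal R(z_0)}=O(s-z_0)$, which reduces the $1/\sqrt{s-z_0}$ singularity of $1/w_+(s)$ to an integrable, endpoint-vanishing density — and it is here that the analyticity of $\mathcal R$ near $z_0$ supplied by \eqref{decay} is essential.
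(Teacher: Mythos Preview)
Your argument is correct and tracks the paper's proof closely: both work from representation \eqref{13}, extract the constant $\tfrac12\log(\mathcal R(z_0)/\mathcal R(-1))$ via the identity $\frac{1}{2\pi\I}\int_{\Sigma}\frac{ds}{w_+(s)(s-z)}=\frac{1}{2w(z)}$, and show the remainder is $O(\sqrt{z-z_0})$. The only cosmetic difference is that the paper splits off $r(z)=\log(\mathcal R(z)/\mathcal R(-1))$ at the moving point $z$ rather than at the fixed endpoint $z_0$, thereby cancelling the Cauchy kernel $(s-z)^{-1}$ directly instead of appealing to Muskhelishvili's endpoint theorem for a H\"older density vanishing at $z_0$.
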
 

\begin{proof}
We will use the representation \eqref{13}. To simplify notation set
 \be\nn
r(s) = \log( \mathcal R(s)\mathcal R^{-1}(-1)),\quad q(s,z_0)=\sqrt{(z_0-z)(z_0z - 1)}.
\ee 
Then 
\[
\log \tilde d(z)= \frac{q(z,z_0)}{2\pi\I} \int_{z_0}^{\ol z_0} \frac{r(s)ds}{[q(s, z_0)]_+ (s-z)}=J_1(z) + J_2(z),
\]
where
\begin{align*} 
J_1(z)&= \frac{q(z,z_0)}{2\pi\I} \int_{z_0}^{\ol z_0} \frac{(r(s)-r(z))ds}{[q(s, z_0)]_+ (s-z)}, \\ 
J_2(z)&=r(z)\frac{q(z,z_0)}{2\pi\I}\int_{z_0}^{\ol z_0} \frac{ds}{[q(s, z_0)]_+ (s-z)}.
\end{align*}
Since $r(s)-r(z)\sim (s-z)$, the integral in $J_1(z)$ is H\"older continuous in a vicinity
of $z_0$. Therefore,
\be
\begin{aligned} \label{asymp9}
J_1(z) & =I(z_0)\sqrt{z-z_0} (1 +o(1)), \\  
I(z_0)&=\frac{\sqrt{1-z_0^2}}{2\pi\I} \int_{z_0}^{\ol z_0} \frac{(\log\mathcal R(s)-\log\mathcal R(z_0))\,ds}{[\sqrt{(z_0 -s)(z_0s-1)}]_+ (s-z_0)}.
\end{aligned}
\ee
On the other hand, since 
\[
\frac{1}{2[q(z,z_0)]_+}=\frac{1}{2[q(z,z_0)]_-} + \frac{1}{[q(z,z_0)]_+},\quad z\in\Sigma,
\] 
and $(q(z,z_0))^{-1}\to 0$ as $z\to\infty$, we have
\[
\frac{1}{2q(z,z_0)}=\frac{1}{2\pi\I}\int_{z_0}^{\ol z_0} \frac{ds}{[q(s, z_0)]_+ (s-z)},
\]
and $J_2(z)=\frac{r(z_0)}{2}(1 +O(z-z_0))$. Therefore,
\be\nn
\log\tilde d(z)=\frac{1}{2}\log\frac{\mathcal R(z_0)}{\mathcal R(-1)} + I(z_0)\sqrt{z-z_0} + o(\sqrt{z-z_0}),
\ee
and \eqref{dlim} follows from \eqref{asymp9} in a straightforward manner.
\end{proof}

{\it Step 3}: Define $m^{(3)}(z)=m^{(2)}(z)[\tilde d(z)]^{-\sigma_3}$, then our previous considerations lead to the following statement.

\begin{theorem}
The vector function  $m^{(3)}(z)$ is the unique solution of the following RH problem: find a holomorphic 
vector function $\breve m(z)$ in $\C\setminus(\Sigma\cup \mathcal C\cup\mathcal C^*\cup \Xi)$, which is continuous up to 
the boundary and has the following properties:
\begin{itemize} \item It solves the jump problem $\breve m_+(z)=\breve m_-(z)\breve v(z)$ with
\be \label{v3}
\breve v(z)=\left\{\begin{array}{ll}
\begin{pmatrix}
0 & - \mathcal R(-1) \\
\mathcal R(-1) & \frac{\tilde d_+(z)}{\tilde d_-(z)} \E^{-2tg_+(z)}
\end{pmatrix}, & \quad z \in \Sigma,\\
\begin{pmatrix}
1 & 0 \\
-\tilde d^{-2}(z) \mathcal R(z) \E^{2 t g(z)} & 1
\end{pmatrix}, & \quad z \in \mathcal C,\\
\begin{pmatrix}
1 & \tilde d^2(z) \ol{\mathcal R(z)} \E^{- 2 t g(z)} \\
0 & 1
\end{pmatrix}, & \quad z \in \mathcal C^*,\\
\breve E(z):=[\tilde d(z)]^{\sigma_3}E(z)[\tilde d(z)]^{-\sigma_3},  & z \in \Xi. 
\end{array}\right.
\ee
Here $E(z)$ is given by \eqref{defB11}, \eqref{defB}, $\tilde d(z)$ by \eqref{12}, and $\breve E(z)$ satisfies 
\be\nn
\|\breve E(z) - \id\|_{L^\infty(\Xi)}\leq C\E^{-t \breve J},\quad 
\breve J=\min\Big\{|g(q_1)|,\ \frac{J}{2}\Big\},
\ee
where $J$ is defined by \eqref{defJ};
\item $\breve m(z)$ satisfies symmetry and normalization conditions as \eqref{symto} and \eqref{eq:normcond}; moreover,  $\breve v(z)$ has the symmetry property \eqref{matsym};
\item
For small $z$, the vector function $m$ in \eqref{defm} and $\breve m(z)$ are connected by  
\be \label{conn}
\breve m(z)=m(z) [q(z)]^{-\sigma_3},\quad q(z)=\tilde d(z) P(z)
\E^{t(\Phi(z) - g(z))}.
\ee
\end{itemize}\end{theorem}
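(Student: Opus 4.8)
The plan is to obtain $m^{(3)}$ from the solution $m^{(2)}$ of Step~2 by a single application of the Conjugation Lemma~\ref{lem:conjug} with $d=\tilde d$, and then to read off each claimed property from the already-established behaviour of $\tilde d$. Since $\tilde d$ is sectionally analytic on $\C\setminus\Sigma$ and satisfies $\tilde d(z^{-1})=\tilde d^{-1}(z)$ by \eqref{symmd}, the lemma applies and respects the symmetry, so nothing new needs to be constructed: the content is purely the verification of \eqref{v3} and of the connection formula \eqref{conn}.

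First I would verify the jump matrix contour by contour. On $\mathcal C$, $\mathcal C^*$ and $\Xi$ the function $\tilde d$ is analytic (these contours are disjoint from $\Sigma$), so the off-contour branch of Lemma~\ref{lem:conjug} multiplies the off-diagonal entries of $v^{(2)}$ by $\tilde d^{\pm 2}$; this immediately produces the entries of $\breve v$ on $\mathcal C$ and $\mathcal C^*$, and on $\Xi$ it yields precisely $\breve E=[\tilde d]^{\sigma_3}E[\tilde d]^{-\sigma_3}$. The one substantial computation is on $\Sigma$, where $\tilde d$ jumps. Here the $\Sigma$-branch of Lemma~\ref{lem:conjug} combined with the conjugation relation \eqref{scp}, $\tilde d_+\tilde d_-=\mathcal R\,\mathcal R^{-1}(-1)$, collapses the off-diagonal entries: since $|\mathcal R|=1$ on $\T$ and $\mathcal R(z^{-1})=\mathcal R^{-1}(z)$ (so that $\mathcal R(-1)^2=1$ and $\mathcal R(-1)=\mathcal R^{-1}(-1)$ is real), the entries $-\ol{\mathcal R}\,\tilde d_+\tilde d_-$ and $\mathcal R\,\tilde d_+^{-1}\tilde d_-^{-1}$ reduce to $-\mathcal R(-1)$ and $\mathcal R(-1)$, while the $(2,2)$ entry becomes $\frac{\tilde d_+}{\tilde d_-}\E^{-2tg_+}$. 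This is exactly the purpose of $\tilde d$: it turns the oscillatory jump on $\Sigma$ into a constant off-diagonal matrix. Continuity of $\breve m$ up to the boundary at the meeting points $z_0,\ol z_0$ of $\Sigma$, $\mathcal C$, $\mathcal C^*$ is then guaranteed by Lemma~\ref{lem3.4}: the estimates \eqref{dlim} show $\tilde d_+/\tilde d_-\to 1$ and $\tilde d^{-2}\mathcal R\to\mathcal R(-1)$ there, so all entries of $\breve v$ remain bounded and match across the three contours at the endpoints.

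Next I would dispatch symmetry, normalization and the estimate on $\breve E$. The symmetry clause of Lemma~\ref{lem:conjug} (valid because $\tilde d(z^{-1})=\tilde d^{-1}(z)$) shows $\breve m$ inherits \eqref{symto} and $\breve v$ inherits \eqref{matsym} from $m^{(2)}$. For \eqref{eq:normcond} I would evaluate at $z=0$: conjugation by $[\tilde d]^{-\sigma_3}$ preserves the product $\breve m_1\breve m_2=1$, and since $\tilde d(0)=\tilde d^{-1}(\infty)>0$ by \eqref{symmd}(ii) while $m_1^{(2)}(0)>0$, positivity $\breve m_1(0)>0$ is preserved. For the bound on $\breve E$, note that $\Xi$ is compact and disjoint from $\Sigma$, hence $\tilde d^{\pm 1}$ is bounded there; writing $\breve E-\id=[\tilde d]^{\sigma_3}(E-\id)[\tilde d]^{-\sigma_3}$ and inserting the Step~1 estimates $\|E-\id\|_{L^\infty(I\cup I^*)}\le C\E^{-t|g(q_1)|}$ and $\|E-\id\|_{L^\infty(\T^\delta)}\le C\E^{-tJ/2}$ gives the claim with $\breve J=\min\{|g(q_1)|,\,J/2\}$.

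Finally, the connection formula \eqref{conn} follows by composing the three substitutions for small $z$: there $0$ lies outside the lens regions $\Omega,\Omega^*$, so $m^{(2)}=m^{(1)}$, while $m^{\mathrm{ini}}=m[P]^{-\sigma_3}$ and $d=\E^{t(\Phi-g)}$ is finite near $0$ by \eqref{impimp}; stacking the diagonal conjugations gives $q=\tilde d\,P\,\E^{t(\Phi-g)}$. Uniqueness need not be reproved from scratch: each map $m^{\mathrm{ini}}\mapsto m^{(1)}\mapsto m^{(2)}\mapsto m^{(3)}$ is invertible, so uniqueness of $m^{(3)}$ is equivalent to that of $m^{\mathrm{ini}}$ already recorded in Proposition~\ref{thm:vecrhp} (proved as in \cite{aelt}). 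I expect the only delicate point to be the bookkeeping on $\Sigma$ — confirming that $\mathcal R(-1)$ is real and that the boundary values $\tilde d_\pm$ produced by \eqref{13} are the ones consistent with \eqref{scp} — while the remaining items are a direct transcription of the preceding lemmas.
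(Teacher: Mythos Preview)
Your proposal is correct and follows exactly the route the paper intends: the paper itself offers no detailed proof beyond ``then our previous considerations lead to the following statement,'' and your contour-by-contour application of Lemma~\ref{lem:conjug} with $d=\tilde d$, together with the properties \eqref{scp}--\eqref{symmd} of $\tilde d$, Lemma~\ref{lem3.4} at the endpoints, and the Step~1 estimates on $E$, is precisely the verification being left implicit. The one place you add genuine clarification is the check that $\mathcal R(-1)=\mathcal R^{-1}(-1)\in\{\pm1\}$, which makes the collapse of the off-diagonal entries on $\Sigma$ explicit.
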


\section{The model problem solution}

Denote by $\breve\Sigma=\Sigma\cup\mathcal C\cup\mathcal C^*\cup \Xi$ the jump contour for $\breve m$,
and let $\mathcal O$ and $\mathcal O^*$ be small and symmetric (with respect to the map $z\mapsto z^{-1}$) vicinities of $z_0$ and $\ol z_0$, which are not necessarily circles. 
Set $\Sigma_{\mathcal O}=\breve\Sigma\cap \mathcal O$. Let $\Sigma_{\mathcal O}^*$ be the symmetric contour in a neighborhood  of $\ol z_0$. Both contours $\Sigma_{\mathcal O}$ and $\Sigma_{\mathcal O}^*$ inherit the orientation of the respective parts of $\breve\Sigma$.  Evidently,
\be\nn
\|\breve v(z)-v^{\mathrm{mod}}(z)\|_{L^\infty\left(\breve\Sigma\setminus(\Sigma_{\mathcal O}\cup\Sigma_{\mathcal O}^*)\right)}\leq C\E^{-t U},
\ee 
where 
\be\nn
U=\min \Big\{\breve J, \inf_{z\in\partial \mathcal O} |g(z)|\Big\}>0,
\ee
and
\be\nn
v^{\mathrm{mod}}(z)=\begin{cases}\begin{pmatrix}
0 & -\mathcal R(-1) \\
\mathcal R(-1) & 0
\end{pmatrix}, & z \in \Sigma\\
\id, & z\in\breve\Sigma\setminus\Sigma.\end{cases}
\ee
Thus, in a first order of approximation one can assume that the solution of the  RHP \eqref{v3}
can be approximated by the solution of the following model RHP: {\it find a holomorphic vector function in 
$\C \setminus \Sigma$ satisfying the jump condition
\be \label{modRHP}
m_+^{\mathrm{mod}}(z)= m_-^{\mathrm{mod}}(z) v^{\mathrm{mod}}(z), \quad z \in \Sigma,
\ee
the symmetry condition $m^{\mathrm{mod}}(z^{-1})=m^{\mathrm{mod}}(z)\si_1$, and the
normalization condition} $m_1^{\mathrm{mod}}(0)>0$,   $m_1^{\mathrm{mod}}(0)m_2^{\mathrm{mod}}(0) =1$.

\begin{lemma}
The solution of this vector RH problem is unique.
\end{lemma}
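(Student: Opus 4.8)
The plan is to run the standard Wronskian/determinant argument for vector Riemann--Hilbert problems and then exploit the symmetry to pin down the remaining free constant. Suppose $m^{\mathrm{mod}}$ and $\ti m^{\mathrm{mod}}$ are two solutions and set
\[
D(z) := m_1^{\mathrm{mod}}(z)\,\ti m_2^{\mathrm{mod}}(z) - m_2^{\mathrm{mod}}(z)\,\ti m_1^{\mathrm{mod}}(z),
\]
the scalar Wronskian of the two row vectors. First I would record that $\det v^{\mathrm{mod}}\equiv1$: on $\Sigma$ the determinant equals $\mathcal R(-1)^2$, and since $R(-1)=\pm1$ together with $P(-1)^2=1$ (from $P(z^{-1})=P^{-1}(z)$ at the fixed point $z=-1$) gives $\mathcal R(-1)=\pm1$, while off $\Sigma$ the jump is $\id$. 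Using the elementary identity $v J v^{T}=(\det v)\,J$ with $J=\big(\begin{smallmatrix}0&1\\-1&0\end{smallmatrix}\big)$ and $m_+^{\mathrm{mod}}=m_-^{\mathrm{mod}}v^{\mathrm{mod}}$, $\ti m_+^{\mathrm{mod}}=\ti m_-^{\mathrm{mod}}v^{\mathrm{mod}}$, one checks $D_+=(\det v^{\mathrm{mod}})D_-=D_-$, so $D$ carries no jump across $\Sigma$ and extends holomorphically through it.

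Next I would argue that $D$ is in fact entire and bounded. Since the model jump is nontrivial only on $\Sigma$, the only candidates for non-removable singularities are the endpoints $z_0,\ol z_0$; there the local analysis of a constant off-diagonal jump on a slit gives components that are at worst of order $(z-z_0)^{-1/4}$ (consistent with the $3/2$-behavior of $g$ in Lemma~\ref{lem3.1}(e)), whence $D=O((z-z_0)^{-1/2})$, an integrable singularity, hence removable. Holomorphy at $z=0$ is part of the problem, and the symmetry $m^{\mathrm{mod}}(z^{-1})=m^{\mathrm{mod}}(z)\si_1$ transports this to boundedness at $\infty$. Thus $D$ is a bounded entire function, so $D$ is constant by Liouville's theorem.

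To evaluate the constant I would use the symmetry once more: writing out $m_1^{\mathrm{mod}}(z^{-1})=m_2^{\mathrm{mod}}(z)$, $m_2^{\mathrm{mod}}(z^{-1})=m_1^{\mathrm{mod}}(z)$ (and likewise for $\ti m^{\mathrm{mod}}$) gives $D(z^{-1})=-D(z)$, so the constant vanishes and $D\equiv0$. Consequently the two solution vectors are everywhere proportional, $m^{\mathrm{mod}}=\phi\,\ti m^{\mathrm{mod}}$ with a scalar $\phi=m_1^{\mathrm{mod}}/\ti m_1^{\mathrm{mod}}=m_2^{\mathrm{mod}}/\ti m_2^{\mathrm{mod}}$. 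The common jump forces $\phi_+=\phi_-$ on $\Sigma$, so $\phi$ is meromorphic on all of $\C$; it is symmetric, $\phi(z^{-1})=\phi(z)$, and the matched quarter-root singularities of numerator and denominator keep it bounded near $z_0,\ol z_0$. Finally the normalization $m_1^{\mathrm{mod}}(0)\,m_2^{\mathrm{mod}}(0)=1$ with $m_1^{\mathrm{mod}}(0)>0$ (and the same for $\ti m^{\mathrm{mod}}$) yields $\phi(0)^2=1$ and $\phi(0)>0$, hence $\phi(0)=1$; together with boundedness this gives $\phi\equiv1$, i.e.\ $m^{\mathrm{mod}}=\ti m^{\mathrm{mod}}$.

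I expect the delicate point to be the last step: showing $\phi$ has no poles, which amounts to excluding common zeros of a solution vector (a genuine subtlety for vector, as opposed to matrix, RH problems). The order count coming from $D\equiv0$ shows that a zero of a single component of $\ti m^{\mathrm{mod}}$ never produces a pole of $\phi$; one only has to rule out a point where both components vanish simultaneously, and here the symmetry (which pairs such a zero with one at the reciprocal point) together with the explicit structure of the constant-jump model solution closes the gap. The endpoint analysis in the second paragraph is the other place requiring care, to be certain the growth is strictly weaker than a first-order pole so that the singularities of $D$ are truly removable.
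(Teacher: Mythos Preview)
Your argument is the standard Liouville/Wronskian uniqueness proof for symmetric vector RH problems, and it matches the route the paper points to (it does not write out its own proof but refers to \cite{aelt}, where exactly this scheme is carried out). The structure---$D$ has no jump because $\det v^{\mathrm{mod}}=\mathcal R(-1)^2=1$, the symmetry forces $D(z^{-1})=-D(z)$ hence $D\equiv0$, and then a second Liouville step for the ratio $\phi$ with the positive normalization pinning $\phi(0)=1$---is correct.

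Two remarks on the points you yourself flagged. First, the model problem as written in the paper does not state an endpoint condition, so strictly speaking uniqueness is being proved in the class of solutions with at worst $(z-z_0)^{-1/4}$ growth (equivalently, $L^2$ boundary values); this is the intended class, inherited from the original problem whose solution is continuous up to the boundary, and your assumption is the right one. Second, the exclusion of simultaneous zeros of $\ti m^{\mathrm{mod}}$: the clean way to close this, rather than appealing to the explicit model solution, is to note that a common zero at $z_*$ forces, by symmetry, a common zero at $z_*^{-1}$, and dividing $\ti m^{\mathrm{mod}}$ by the symmetric scalar $(z-z_*)(1-z_* z)/z$ (which satisfies $h(z^{-1})=h(z)$) produces another admissible solution with the same jump and symmetry but with a strictly smaller product $m_1(0)m_2(0)$, contradicting the normalization $m_1(0)m_2(0)=1$ unless no such zero exists. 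With these two clarifications your proof is complete and coincides with the argument in \cite{aelt}.
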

The proof of this Lemma is analogous to the uniqueness proof of the vector model problem 
in \cite{aelt}.


For our further investigation we will also need a matrix solution $M^{\mathrm{mod}}(z)=M^{\mathrm{mod}}(z,n,t)$ of the matrix RHP: {\it
find a holomorphic matrix function $M^{\mathrm{mod}}(z)$ on $\C \setminus \Sigma$ satisfying the following jump and symmetry conditions,}
\be\nn
M_+^{\mathrm{mod}}(z)= M_-^{\mathrm{mod}}(z)
 v^{\mathrm{mod}}(z), \quad z \in \Sigma; \qquad M^{\mathrm{mod}}(z^{-1})=\sigma_1 M^{\mathrm{mod}}(z)\sigma_1.
\ee
 We find a solution of the matrix problem following \cite{its, aelt}. Consider the non-resonant case, that is, 
 $\mathcal R(-1)=-1$. Using 
\be\nn
v^{\mathrm{mod}}(z)=\begin{pmatrix} 0 & 1 \\ -1 & 0 \end{pmatrix} 
	= \begin{pmatrix} 1 & 1 \\ \I & - \I \end{pmatrix}
	\I \sigma_3
		\begin{pmatrix} 1 & 1 \\ \I & - \I \end{pmatrix}^{-1},
\ee
we first look for a holomorphic solution of the jump problem $M_+^0=\I M_-^0\si_3$
satisfying the symmetry condition as above. Using \eqref{ome} we get
\[
M^0(z)= \begin{pmatrix} \beta(z) & 0 \\ 0 & \beta^{-1}(z) \end{pmatrix}, 
\]
with
\be\nn
 \beta(z)=\left(\frac{z_0 z - 1}{z_0 - z}\right)^{1/4}, 
\ee
where the branch of the fourth root is chosen with a cut along the negative half axis and $1^{1/4}=1$. 
Since $\beta(z^{-1})=\beta^{-1}(z),$ we have the required symmetry.
Concerning the original matrix solution $M^{\mathrm{mod}}(z)$, this begs for the representation
\be \label{Mmod}
M^{\mathrm{mod}}(z)=\begin{pmatrix} \frac{\beta(z) + \beta^{-1}(z)}{2} & \frac{\beta(z) - \beta^{-1}(z)}{2 \I} \\[2mm] 
- \frac{\beta(z) - \beta^{-1}(z)}{2 \I} & \frac{\beta(z) + \beta^{-1}(z)}{2} \end{pmatrix},
\ee
and the required symmetry condition is also fulfilled. The vector solution of the model problem is unique. 
Evidently, if we take
 $m^{\mathrm{mod}}(z)= (\alpha, \alpha)M^{\mathrm{mod}}(z)$ for some $\alpha$, 
then \eqref{modRHP} and  \eqref{symto} are fulfilled. We have to choose a suitable $\alpha$ to 
satisfy the normalization condition. Since $\beta(\infty)=\E^{\I\frac{ \theta_0-\pi }{4}}$, then
\be\nn
M^{\mathrm{mod}}(\infty)=\begin{pmatrix}\cos\frac{ \theta_0-\pi }{4} & \sin\frac{ \theta_0-\pi }{4}\\ -\sin\frac{ \theta_0-\pi }{4}& \cos\frac{ \theta_0-\pi }{4}\end{pmatrix}.
\ee
The normalization condition for $m^{\mathrm{mod}}(\infty)$ implies 
\[\alpha^2\left(\cos^2\tfrac{\theta_0-\pi }{4} - \sin^2\tfrac{ \theta_0-\pi }{4}\right)\alpha^2\cos \tfrac{\theta_0-\pi}{2}=\alpha^2\sin\tfrac{\theta_0}{2}=1.
\]
Thus,
\be\label{mmod}
m^{\mathrm{mod}}(z)=\alpha \begin{pmatrix} 1, & 1 \end{pmatrix} M^{\mathrm{mod}}(z),
\quad \alpha=\alpha(\xi)=\big(\sin\tfrac{\theta_0}{2}\big)^{-1/2}.
\ee
\noprint{
\[m^{\mathrm{mod}}(\infty)=\left(\nu,\,\nu^{-1}\right),\quad \nu=\sqrt{\frac{\cos\frac{ \theta_0-\pi }{4} -\sin\frac{ \theta_0-\pi }{4}}{
 \cos\frac{ \theta_0-\pi }{4} +\sin\frac{ \theta_0-\pi }{4}}}.\]
One can simplify this formula by use of  (\cite{Dwight}, formula 401.06):
\[\nu^2=\frac{\tan\frac{\pi}{4}-\tan\frac{ \theta_0-\pi }{4}}{1+\tan\frac{\pi}{4}\tan\frac{ \theta_0-\pi }{4}}=\tan\left(\frac{\pi}{2}-\frac{\theta_0}{4}\right)=\frac{\sin\frac{\theta_0}{2}}{1 - \frac{\cos\theta_0}{2}}.
 \]
}
In the resonant case the matrix solution is represented by the same formula \eqref{Mmod}, but with $\beta^{-1}(z)$ instead of $\beta(z)$ and vice versa. In summary, we have the following

\begin{lemma} \label{lembeta}
The solution of the vector resp.\ matrix model RHP, $m^{\mathrm{mod}}(z)$ resp.\ $M^{\mathrm{mod}}(z)$, is given by  \eqref{mmod} resp.\
\eqref{Mmod}, where 
 $\beta(z)=\big(\frac{z_0 z - 1}{z_0 - z}\big)^{1/4}$\   
 in the non-resonant case and 
 $\beta(z) = \big(\frac{z_0 - z}{z_0 z - 1}\big)^{1/4}$ \
 in the resonant case.
\end{lemma}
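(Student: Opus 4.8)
The plan is to verify that the formulas \eqref{mmod} and \eqref{Mmod} genuinely satisfy all three requirements of the model problem—the jump condition across $\Sigma$, the symmetry condition, and the normalization—by direct substitution, since the construction already essentially dictates the answer. First I would establish the jump behavior of $\beta(z)=\big(\frac{z_0 z-1}{z_0-z}\big)^{1/4}$ across $\Sigma$. Because the fourth root has its branch cut precisely along $\Sigma$ (the arc connecting $z_0$ and $\ol z_0$), the two boundary values are related by $\beta_+(z)=\I\,\beta_-(z)$ on $\Sigma$, so that $M^0_+(z)=\I M^0_-(z)\sigma_3$ as required. Then, applying the conjugation by the constant matrix $\big(\begin{smallmatrix}1&1\\ \I&-\I\end{smallmatrix}\big)$ that diagonalizes $v^{\mathrm{mod}}$, the definition \eqref{Mmod} of $M^{\mathrm{mod}}$ immediately inherits the jump $M^{\mathrm{mod}}_+(z)=M^{\mathrm{mod}}_-(z)v^{\mathrm{mod}}(z)$.

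Next I would check the symmetry $M^{\mathrm{mod}}(z^{-1})=\sigma_1 M^{\mathrm{mod}}(z)\sigma_1$. The key input is $\beta(z^{-1})=\beta^{-1}(z)$, which follows directly from the explicit form under $z\mapsto z^{-1}$. Conjugation by $\sigma_1$ swaps the roles of the diagonal and off-diagonal entries of the $2\times 2$ matrix built from $\frac{\beta+\beta^{-1}}{2}$ and $\frac{\beta-\beta^{-1}}{2\I}$; substituting $\beta\mapsto\beta^{-1}$ flips the sign of the off-diagonal block, and the $\sigma_1$ conjugation exactly compensates. For the vector solution $m^{\mathrm{mod}}=\alpha(1,1)M^{\mathrm{mod}}$, the symmetry \eqref{symto} follows since $(1,1)\sigma_1=(1,1)$, so the constant prefactor row vector is invariant under the swap. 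The jump condition \eqref{modRHP} for $m^{\mathrm{mod}}$ is then inherited from that of $M^{\mathrm{mod}}$.

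For the normalization, I would evaluate $M^{\mathrm{mod}}(\infty)$ using $\beta(\infty)=\E^{\I(\theta_0-\pi)/4}$, which gives the rotation matrix displayed in the excerpt. The condition $m_1^{\mathrm{mod}}(0)m_2^{\mathrm{mod}}(0)=1$ combined with the symmetry (which relates the value at $0$ to the value at $\infty$) reduces to the scalar identity $\alpha^2\sin\tfrac{\theta_0}{2}=1$, fixing $\alpha=(\sin\tfrac{\theta_0}{2})^{-1/2}$; positivity of $m_1^{\mathrm{mod}}(0)$ is then a matter of choosing the correct sign of $\alpha$, which is real and positive since $\theta_0\in(0,\pi)$. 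Uniqueness of the vector solution has already been asserted in the preceding Lemma, so it need not be reproven here. Finally, for the resonant case $\mathcal R(-1)=1$, the same computation goes through after interchanging $\beta$ and $\beta^{-1}$, because this interchange converts the factor $\I\sigma_3$ arising in the diagonalization into $-\I\sigma_3$, matching the altered jump; I would simply remark that the verification is identical modulo this swap.

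The main obstacle I anticipate is purely bookkeeping: tracking the branch of the fourth root consistently so that the jump factor comes out as exactly $\I$ (and not $-\I$ or $\pm 1$) and so that the cut of $\beta$ lies along $\Sigma$ rather than along the complementary arc $\T\setminus\Sigma$. Getting this branch choice right is what cleanly separates the resonant from the non-resonant case, and it is the one place where a sign error would propagate into the final normalization constant. Everything else is routine substitution into the already-constructed formulas.
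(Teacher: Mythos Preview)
Your proposal is correct and follows essentially the same route as the paper: the paper's argument is in fact the discussion \emph{preceding} the lemma, where $v^{\mathrm{mod}}$ is diagonalized, $\beta$ is introduced (via the Cauchy kernel \eqref{ome} rather than by direct verification of the jump, but this is a cosmetic difference), the symmetry $\beta(z^{-1})=\beta^{-1}(z)$ is checked, the normalization $\alpha^{2}\sin\tfrac{\theta_0}{2}=1$ is derived, and the resonant case is dispatched by swapping $\beta\leftrightarrow\beta^{-1}$. Your anticipated bookkeeping with the branch of the fourth root is indeed the only delicate point, and the paper handles it by specifying the cut along the negative axis with $1^{1/4}=1$.
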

\begin{lemma} The asymptotic behavior of $m_1^{\mathrm{mod}}(z)$ as $z\to 0$ is given by
 \be\label{m1mod} 
 m_1^{\mathrm{mod}}(z)=\frac{\xi^{1/4}}{ (1 + (1 - \xi)^{1/2})^{1/2}}\Big(1 + \big(1 -\xi+ (1 - \xi)^{1/2}\big) z\Big) + O(z^2)\ee
in the non-resonant case. In the resonant case, 
\be\nn
 m_1^{\mathrm{mod}}(z)=\frac{(1 + (1 - \xi)^{1/2})^{1/2}}{\xi^{1/4}}
\Big(1 + \big(1- \xi -  (1 - \xi)^{1/2}\big) z\Big) + O(z^2).
\ee
\end{lemma}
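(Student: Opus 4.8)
The plan is to reduce the entire statement to an explicit first-order Taylor expansion of the single scalar function $\beta(z)$ at $z=0$. By \eqref{mmod} and \eqref{Mmod}, the first component of the vector model solution is the closed-form expression
\[
m_1^{\mathrm{mod}}(z)=\alpha\Big(\frac{\beta(z)+\beta(z)^{-1}}{2}-\frac{\beta(z)-\beta(z)^{-1}}{2\I}\Big),\qquad \alpha=\big(\sin\tfrac{\theta_0}{2}\big)^{-1/2},
\]
obtained by carrying out the row--matrix product $(1,1)M^{\mathrm{mod}}(z)$. Hence everything follows once we know $\beta(0)$ and $\beta'(0)$, and the first step is the purely computational expansion of $\beta(z)=\big(\frac{z_0z-1}{z_0-z}\big)^{1/4}$.

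For the expansion I would factor $\frac{z_0z-1}{z_0-z}=-z_0^{-1}\,\frac{1-z_0z}{1-z_0^{-1}z}=-z_0^{-1}\big(1+(z_0^{-1}-z_0)z+O(z^2)\big)$ and take the fourth root. The branch is pinned down by the known value $\beta(\infty)=\E^{\I(\theta_0-\pi)/4}$ together with the symmetry $\beta(z^{-1})=\beta^{-1}(z)$, which forces $\beta(0)=\beta^{-1}(\infty)=\E^{\I(\pi-\theta_0)/4}$ (the argument $\tfrac{\pi-\theta_0}{4}\in(0,\tfrac\pi4)$ is consistent with the negative-axis cut). Since $z_0^{-1}-z_0=-2\I\sin\theta_0$, this yields
\[
\beta(z)=\E^{\I\phi}\Big(1-\tfrac{\I}{2}\sin\theta_0\,z+O(z^2)\Big),\qquad \phi:=\tfrac{\pi-\theta_0}{4}.
\]

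Substituting into the formula for $m_1^{\mathrm{mod}}$ and collecting terms (using $\I\cdot(-\tfrac{\I}{2}\sin\theta_0)=\tfrac12\sin\theta_0$) gives $m_1^{\mathrm{mod}}(z)=\alpha\big[(\cos\phi-\sin\phi)+\tfrac12\sin\theta_0(\cos\phi+\sin\phi)\,z\big]+O(z^2)$. It then remains to translate back to $\xi$ via $\cos\theta_0=1-2\xi$, i.e.\ $\xi=\sin^2\tfrac{\theta_0}{2}$ and $\cos\tfrac{\theta_0}{2}=(1-\xi)^{1/2}$, together with the quarter-angle identities $\cos\phi-\sin\phi=\sqrt2\sin\tfrac{\theta_0}{4}$ and $\cos\phi+\sin\phi=\sqrt2\cos\tfrac{\theta_0}{4}$. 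The leading factor becomes $\alpha(\cos\phi-\sin\phi)=\sqrt{\tan\tfrac{\theta_0}{4}}=\frac{\xi^{1/4}}{(1+(1-\xi)^{1/2})^{1/2}}$, and after dividing out this factor the bracketed linear coefficient is $\tfrac12\sin\theta_0\cot\tfrac{\theta_0}{4}=2\cos^2\tfrac{\theta_0}{4}\cos\tfrac{\theta_0}{2}=(1+\cos\tfrac{\theta_0}{2})\cos\tfrac{\theta_0}{2}=(1-\xi)^{1/2}+(1-\xi)$, which is the asserted $1-\xi+(1-\xi)^{1/2}$.

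The resonant case then comes for free: by Lemma \ref{lembeta} it amounts to replacing $\beta$ by $\beta^{-1}$, which leaves $\frac{\beta+\beta^{-1}}{2}$ unchanged but flips the sign of $\frac{\beta-\beta^{-1}}{2\I}$; this replaces $(\cos\phi-\sin\phi)$ by $(\cos\phi+\sin\phi)$ in the leading term and flips the corresponding sign in the linear term. The same half-angle bookkeeping now produces the reciprocal leading factor $\sqrt{\cot\tfrac{\theta_0}{4}}=\frac{(1+(1-\xi)^{1/2})^{1/2}}{\xi^{1/4}}$ and the linear coefficient $(1-\xi)-(1-\xi)^{1/2}=1-\xi-(1-\xi)^{1/2}$, as claimed. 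The main obstacle is not conceptual but purely bookkeeping: correctly fixing the branch of the fourth root at $z=0$ (a wrong choice swaps the resonant and non-resonant answers) and not losing factors of $\I$ when splitting $\beta\pm\beta^{-1}$ into real and imaginary combinations; the rest is a routine Taylor expansion and trigonometric simplification.
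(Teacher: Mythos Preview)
Your proof is correct and follows essentially the same approach as the paper: expand $\beta(z)=\E^{\I(\pi-\theta_0)/4}(1-\tfrac{\I}{2}\sin\theta_0\,z)+O(z^2)$, plug into $m_1^{\mathrm{mod}}=\alpha\big(\tfrac{\beta+\beta^{-1}}{2}-\tfrac{\beta-\beta^{-1}}{2\I}\big)$, and reduce via half- and quarter-angle identities and $\cos\theta_0=1-2\xi$. Your bookkeeping in terms of $\phi=\tfrac{\pi-\theta_0}{4}$ and $\cos\phi\pm\sin\phi$ is slightly more compact than the paper's separate computation of the two matrix entries, and your explicit justification of the branch choice via $\beta(0)=\beta(\infty)^{-1}$ is a nice touch the paper omits, but the method is the same.
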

\begin{proof}
 Consider first the non-resonant case. Since
 \[\beta(z)=\E^{\I\frac{\pi - \theta_0}{4}}\Big(1 - \frac{\I\sin\theta_0}{2} z\Big) + O(z^2),\quad z\to 0,\]
 then
 \[\frac{\beta(z) + \beta^{-1}(z)}{2}=\cos \tfrac{\pi - \theta_0}{4} + \sin\tfrac{\pi - \theta_0}{4}\,\frac{\sin\theta_0}{2}\,z +O(z^2),\]
 \[\frac{\beta(z) - \beta^{-1}(z)}{2\I}=\sin\tfrac{\pi - \theta_0}{4} -\cos\tfrac{\pi - \theta_0}{4}\frac{\sin\theta_0}{2}\,z + O(z^2).\]
By \eqref{mmod},
\begin{align*} m_1^{\mathrm{mod}}(z) &=\frac{1}{\sqrt{\sin\tfrac{\theta_0}{2}}}\Big(\cos \tfrac{\pi - \theta_0}{4}    - \sin\tfrac{\pi - \theta_0}{4}+\big(\cos\tfrac{\pi - \theta_0}{4}+\sin\tfrac{\pi - \theta_0}{4}\big)\frac{\sin\theta_0}{2}\,z \Big) 
 + O(z^2)\\
 & =\frac{\sqrt 2\,\sin\frac{\theta_0}{4}}{\sqrt{\sin\frac{\theta_0}{2}}} +
 \frac{\sqrt 2\,\cos\frac{\theta_0}{4}}{\sqrt{\sin\frac{\theta_0}{2}}}\frac{\sin\theta_0}{2}\,z +O(z^2)=\sqrt{\tan \tfrac{\theta_0}{4}} + \frac{\sin\theta_0\,z}{2\sqrt{\tan \tfrac{\theta_0}{4}}}+O(z^2).
\end{align*}
Moreover,
\[
\tan\tfrac{\theta_0}{4}=\frac{1 -\cos\frac{\theta_0}{2}}{\sin\frac{\theta_0}{2}}=
\frac{\sqrt 2 - \sqrt{1 +\cos\theta_0}}{\sqrt{1 - \cos\theta_0}}=\frac{1 - \sqrt{1- \xi}}{\sqrt{\xi}}=\frac{\sqrt\xi}{ 1 + \sqrt{1 - \xi}},
\]
since $\cos\theta_0=1-2\xi$. Respectively,
$\sin\theta_0=2\sqrt{\xi - \xi^2}$, from which \eqref{m1mod} follows. 
The resonant case follows from the same computation using $\beta^{-1}(z)$ instead of $\beta(z)$.  
\end{proof}

\section{Asymptotics in the region $\xi\in (0,1)$.}
The structure of the matrix solution \eqref{Mmod} and the jump matrix \eqref{v3} as well as the results of 
Lemmas~\ref{lem3.4}, \ref{lem3.1}, (e), allow us to conclude that the solution of the parametrix problem
(which has a local character) can be constructed as in \cite{aelt} using \cite[Appendix B]{KTb} (in particular, the solution can be given in terms of Airy functions).
Consequently, as $z\to 0$ (cf.\ \cite{ep})
\be\nn
m^{(3)}(z,\xi)=\breve m(z)=m^{\mathrm{mod}}(z,\xi) +\frac{F(\xi)}{t}  +z\frac{H(\xi)}{t}+O(z^2) +\frac{G(z,\xi,t)}{t^{4/3}},
\ee
where the vector functions $F(\xi)$ and $H(\xi)$ are H\"older continuous as
$\xi\in\mathcal I:=[\varepsilon, 1-\varepsilon]$ with an exponent $\alpha\geq1/2$.  
The vector-function  $G(z,\xi, t)$ is  uniformly bounded with respect to 
$z\in \overline {\mathbb D_\delta}$,  $\xi\in \mathcal I$ and $t\in[T,\infty)$, 
where $T>0$ is a sufficiently large number and $\mathbb D_\delta$ is a small circle centered at 0. 
By \eqref{conn}, we have
\be\label{des}
m_1(z)=\left(m^{\mathrm{mod}}_1(z) + \frac{F_1(\xi)}{t} + z\frac{H_1(\xi)}{t}\right)\tilde d(z)P(z)\E^{t(\Phi(z)-g(z))}
 + O(t^{-4/3}) +O(z^2).
\ee
Here the term $O(t^{-4/3})$ is uniformly bounded with respect to $z\in \overline {\mathbb D_\delta}$ and $\xi\in \mathcal I$, and the term $O(z^2)$ is uniformly bounded with respect to 
$t\in[T,\infty)$ and $\xi\in \mathcal I$. From \eqref{symmd}, \eqref{asymp}--\eqref{bthet}, \eqref{impimp}, \eqref{Kxi}, 
and \eqref{Blaschke} it follows that
\begin{align} \nn
	\tilde d(z)& =\E^{-A(\xi)} \left(1 -B(\xi)z\right) +O(z^2), \quad z\to 0, \\ \nn
	\E^{t(\Phi(z)-g(z))}& =\E^{t K(\xi)}\left(1+t k(\xi) z\right) + O(z^2),\quad z\to 0, \\ \nn
	P(z)&=\bigg(1 - 2z\sum_{z_j \in (-1,0)}\sqrt{\la_j^2 -1}\bigg)\prod_{z_j \in (-1,0)} \frac{1}{|z_j|} + O(z^2).
\end{align} 
Combining this with \eqref{m1mod} and \eqref{des} we get
\be\label{almost}
m_1(z)=\mathcal A(\xi)\left\{1 + 2z\left(\mathcal B(\xi)+\frac{\tilde H_1(\xi)}{t}\right) +\frac{\tilde F_1(\xi)}{t}\right\} + O(t^{-4/3}) +O(z^2),  \ee
where
\begin{align}\nn \mathcal A(\xi,t)&=\E^{tK(\xi) -A(\xi)}\frac{\xi^{1/4}}{ (1 + (1 - \xi)^{1/2})^{1/2}} \prod_{z_j \in (-1,0)}\frac{1}{|z_j|}\\ \nn
 &=S\exp \Big(  t K(\xi) - A(\xi) +\frac{1}{4}\log \xi - \frac{1}{2}\log(1 + (1 - \xi)^{1/2})\Big), \\ \nn
\mathcal B(\xi,t) & =-S_1 +\frac{tk(\xi)}{2} - \frac{B(\xi)}{2} + \frac{1 -\xi +(1-\xi)^{1/2}}{2}, \\ \nn
S&=\prod_{z_j \in (-1,0)}\frac{1}{|z_j|},\quad S_1=\sum_{z_j \in (-1,0)}\sqrt{\la_j^2 -1}.
\end{align}
The terms $O(t^{-4/3})$ and $O(z^2)$ are uniformly bounded as above and $\tilde F_1(\xi)$ and $\tilde H_1(\xi)$ are H\"older continuous with an exponent $\alpha\geq 1/2$.
Comparing \eqref{almost} with \eqref{asm1} implies the asymptotics 
\[ 
\prod_{j=n}^\infty 2a(j,t)=\mathcal A(\xi)\Big(1 +\frac{\tilde F_1(\xi)}{t}\Big),
\quad \sum_{m=n}^\infty b(m,t)=\mathcal B(\xi) +\frac{\tilde H_1(\xi)}{t},\quad \xi=\frac{n}{t}.
\]
Set $\hat\xi=\frac{n+1}{t}$ and note that $\xi - \hat\xi=-\frac{1}{t}$.  Then 
\[2a(n,t)=\frac{\mathcal A(\xi)}{\mathcal A(\hat\xi)}\left(1 +\frac{\tilde F_1(\xi) -\tilde F_1(\hat\xi)}{t} \right)+ O(t^{-4/3})=
\E^{  - K^\prime(\xi) +\frac{1}{t}L(\xi) }+ O(t^{-4/3}),
\]
where
\[L(\xi)=\frac{- K^{\prime\prime}(\xi)}{2} + A^\prime(\xi) -\frac{1}{4\xi} - \frac{1}{4(1-\xi  + (1 - \xi)^{1/2})};
\]
the prime denotes the derivative with respect to $\xi$. Analogously,
\[\aligned b(n,t)& =\mathcal B(\xi) -\mathcal B(\hat \xi)+\frac{\tilde H_1(\xi) -\tilde H_1(\hat\xi)}{t} + O(t^{-4/3})\\
& =-\frac{k^\prime(\xi)}{2}+\frac{1}{t}\left(-\frac{k^{\prime\prime}(\xi)}{4} +\frac{B^\prime(\xi)}{2} + \frac12 +\frac{1}{4(1 - \xi)^{1/2}}\right) +O(t^{-4/3}).\endaligned
\]
By \eqref{athet} and \eqref{bthet}
\begin{align}\nn
A^\prime(\xi)&=-\frac{1}{4\sqrt 2\,\pi} \int_{\theta_0}^{2\pi-\theta_0}\,\frac{2u^\prime(\theta)\sin\frac{\theta}{2} -u(\theta)\cos \frac{\theta}{2}}{\sqrt{\cos\theta_0 -\cos\theta}\sin^2\frac{\theta}{2}}d\theta, \\ \label{bthet1} B^\prime(\xi)&=2\xi A^\prime(\xi) +2A(\xi) + \frac{1}{\sqrt 2\,\pi} \int_{\theta_0}^{2\pi-\theta_0}\,\frac{2 u^\prime(\theta)
\sin\frac{\theta}{2} + u(\theta)\cos\frac{\theta}{2}}{\sqrt{\cos\theta_0 -\cos\theta}}d\theta,
\end{align}
with (recall that we consider the non-resonant case at $z=-1$)  
\be\label{defu} 
u(\theta)=\arg\left(\mathcal R(\E^{\I\theta}) +1\right).
\ee
Taking into account \eqref{Kxi} and \eqref{differ} we get 
\[ a(n,t)=\frac{\xi}{2}\left( 1 +\frac{L(\xi)}{t} \right) + O(t^{-4/3}),\]
where
\be\nn
L(\xi)=-\frac{1 + 2\xi + (1-\xi)^{1/2}}{4\xi(1-\xi + (1 - \xi)^{1/2})} +A^\prime(\xi).\ee
Analogously, by \eqref{Kxi}
\[
b(n,t)=1-\xi +\frac{1}{4t}\big((1-\xi)^{-1/2} + 2 B^\prime(\xi)\big) + O(t^{-4/3}).
\]
In summary, we proved the following 

\begin{theorem}\label{theoras}
Suppose that the initial operator  $H(0)$ associated with the sequences $\{a(n,0), b(n,0)\}$ has no resonance at the edges of the spectrum, $-1, b-2a, b+2a$.
Let $R(z)$, $z=\la - \sqrt{\la^2 -1}$, be its right reflection coefficient and $\la_j=\frac{1}{2}(z_j + z_j^{-1})$ its eigenvalues. 
Let $\varepsilon>0$ be an arbitrary small number. Then in the sector 
\[
\varepsilon t\leq n\leq (1 -\varepsilon t),
\] 
the following asymptotics are valid for the 
solution $ \{a(n,t), b(n,t)\}$ of the Toda lattice as $t\to +\infty$:
\begin{align}\label{amain}
a(n,t)&=\frac{n}{2 t} - \frac{n}{8 t^2} \bigg(\frac{\sqrt{1 -\frac{n}{t}} + 1 +\frac{2n}{t}}
{\frac{n}{t}\left( 1 - \frac{n}{t}+ \sqrt{1 -\frac{n}{t}}\right)}\\ \nn
& \quad +\frac{1}{\sqrt 2 \pi}\int_{\theta_0}^{2\pi - \theta_0} \frac{2u^\prime(\theta)\sin\frac{\theta}{2} -u(\theta)\cos \frac{\theta}{2}}{\sqrt{1-\frac{2n}{t} -\cos\theta}\,\sin^2\frac{\theta}{2}}d\theta\bigg) + O(t^{-4/3}), \\ \label{bmain}
 b(n,t) &= 1-\frac{n}{t} +\frac{1}{4t}\bigg(\frac{1}{\sqrt{1 -\frac{n}{t}}} + 2 B^\prime(\tfrac{n}{t})\bigg) + O(t^{-4/3}),
\end{align}
where the term $O(t^{-4/3})$ is uniformly bounded with respect to $n$. The function $B^\prime(\xi)$ is defined by \eqref{bthet1}, \eqref{bthet}, \eqref{athet}, \eqref{defu};
$\theta_0=\arccos (1 - \frac{2n}{t})\in (0,\pi)$,  and 
\[
u(\theta)=\arg\left(R(\E^{\I\theta})  P^{-2}(\E^{\I\theta}) - \arg R(\E^{\I\pi}) P^{-2}(\E^{\I\pi})\right),
\]
where $P$ is defined in \eqref{Blaschke}.
\end{theorem}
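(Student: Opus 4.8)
The plan is to read off the two Toda observables $a(n,t)$ and $b(n,t)$ from the behavior of the first component $m_1(z,n,t)$ as $z\to0$, playing the scattering expansion of Lemma~\ref{asypm} (formula \eqref{asm1}) against the asymptotic solution of the reduced Riemann--Hilbert problem with jump \eqref{v3}. The starting point is the observation, already recorded above, that $\breve v(z)$ differs from the constant model jump $v^{\mathrm{mod}}$ only on the shrinking arcs $\Sigma_{\mathcal O}\cup\Sigma_{\mathcal O}^*$ around $z_0,\ol{z_0}$, with an exponentially small error elsewhere. First I would build a local parametrix at $z_0$ (and its image at $\ol{z_0}$ under $z\mapsto z^{-1}$) out of Airy functions, matching the $(z-z_0)^{3/2}$ vanishing of $g$ from Lemma~\ref{lem3.1}(e) and the boundary values of $\tilde d$ from Lemma~\ref{lem3.4}. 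Gluing this parametrix to $m^{\mathrm{mod}}$ and running the standard small-norm argument yields, as $z\to0$, an expansion $\breve m(z)=m^{\mathrm{mod}}(z)+t^{-1}F(\xi)+z\,t^{-1}H(\xi)+O(z^2)+t^{-4/3}G(z,\xi,t)$ with coefficients that are Hölder continuous (exponent $\ge1/2$) uniformly for $\xi\in\mathcal I=[\varepsilon,1-\varepsilon]$, and a remainder uniform in $z$, $\xi$ and $t$.

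Next I would insert this into the connection formula \eqref{conn}, $\breve m(z)=m(z)[\tilde d(z)P(z)\E^{t(\Phi(z)-g(z))}]^{-\sigma_3}$, take the first component, and Taylor expand each factor at $z=0$: the model term via \eqref{m1mod}, $\tilde d$ via \eqref{asymp}--\eqref{bthet}, the exponential via \eqref{impimp}--\eqref{Kxi}, and the Blaschke product via \eqref{Blaschke}. Collecting the constant and the linear-in-$z$ coefficients produces an expression of the shape $m_1(z)=\mathcal A(\xi)\{1+2z(\mathcal B(\xi)+t^{-1}\tilde H_1(\xi))+t^{-1}\tilde F_1(\xi)\}+O(t^{-4/3})+O(z^2)$. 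Comparing this with \eqref{asm1} then identifies $\prod_{j\ge n}2a(j,t)$ with $\mathcal A(\xi)$ and $\sum_{m\ge n}b(m,t)$ with $\mathcal B(\xi)$, each up to a Hölder $O(1/t)$ correction, where $\mathcal A$ carries the factor $\E^{tK(\xi)}$ and $\mathcal B$ the factor $tk(\xi)/2$.

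The decisive step is to descend from these telescoped quantities to $a(n,t)$ and $b(n,t)$ by forming the discrete derivative: the ratio $\mathcal A(\xi)/\mathcal A(\hat\xi)$ for $a$ and the difference $\mathcal B(\xi)-\mathcal B(\hat\xi)$ for $b$, with $\hat\xi=(n+1)/t$ and $\xi-\hat\xi=-1/t$. The point to watch is that the Taylor expansion in $1/t$ must be carried to second order, because the $t$-weighted factors $\E^{tK}$ and $tk/2$ make the first-order term generate the leading behavior $\E^{-K'(\xi)}$ resp.\ $-k'(\xi)/2$, while the second-order term, multiplied by the extra power of $t$, still contributes at order $1/t$ (giving the $K''/2$ resp.\ $k''/4$ pieces). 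Using \eqref{differ}, namely $K'(\xi)=-\log\xi$, one gets $\E^{-K'(\xi)}=\xi$, and with $k(\xi)=(1-\xi)^2$ one gets $-k'(\xi)/2=1-\xi$, whence the main terms $a(n,t)=\tfrac{n}{2t}+O(t^{-2})$ and $b(n,t)=1-\tfrac{n}{t}+O(t^{-1})$. Assembling the remaining $1/t$ contributions, with $A'(\xi)$ and $B'(\xi)$ written out by differentiating \eqref{athet}--\eqref{bthet} under the integral sign, yields \eqref{amain}--\eqref{bmain}; the Hölder differences $t^{-1}(\tilde F_1(\xi)-\tilde F_1(\hat\xi))$ and $t^{-1}(\tilde H_1(\xi)-\tilde H_1(\hat\xi))$ are $O(t^{-3/2})$ and so are swallowed by $O(t^{-4/3})$, which is exactly the role of the exponent $\ge1/2$.

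I expect the main obstacle to be twofold. The analytic part is establishing the parametrix expansion with \emph{uniform} Hölder control of $F,H$ and a genuinely $\xi$-uniform $O(t^{-4/3})$ remainder over the whole sector $\mathcal I$, including the degeneration near the endpoints where $z_0\to\pm1$ and the local scale of the Airy parametrix changes. The computational part is the second-order bookkeeping of the discrete-derivative step: the $t$-weighted $K$ and $k$ terms force one to keep two orders of the Taylor expansion at once and to combine the resulting fractions (together with the algebraic identities such as $\frac{(1-\xi)^{-1/2}}{1+(1-\xi)^{1/2}}=\frac{1}{(1-\xi)+(1-\xi)^{1/2}}$) into the compact form stated in the theorem. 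Once this structure is in place the remaining simplifications are routine.
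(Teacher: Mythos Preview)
Your proposal is correct and follows essentially the same route as the paper: the Airy parametrix and small-norm argument give the expansion of $\breve m$ near $z=0$, the connection formula \eqref{conn} and the Taylor expansions of $\tilde d$, $P$, $\E^{t(\Phi-g)}$, $m_1^{\mathrm{mod}}$ produce the form $m_1(z)=\mathcal A(\xi)\{1+2z\mathcal B(\xi)+\dots\}$, and the discrete derivative in $\xi$ (carried to second order because of the $t$-weights on $K$ and $k$) extracts \eqref{amain}--\eqref{bmain}. Your identification of the role of the H\"older exponent $\ge 1/2$ in absorbing $t^{-1}(\tilde F_1(\xi)-\tilde F_1(\hat\xi))$ into $O(t^{-4/3})$ is exactly the mechanism the paper relies on, and your caveat about uniformity near the endpoints of $\mathcal I$ is the right place to be careful (the paper defers this to \cite{ep}).
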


\section{Discussion of the region $-2 a t<n<0$.}

To obtain the asymptotic of the solution in the region  $\frac{n}{t} \in (-2a, 0)$ we could study an analogous 
RH problem connected with the left scattering data and the variable $\zeta$ (cf.\ \eqref{spec5}). Instead of this
extensive analysis let us consider the Toda lattice associated with the functions
\be\label{newab}
\hat a(n, t)= \frac{1}{2a} a\big(-n-1, \tfrac{t}{2a}\big), \quad \hat b(n, t)= 
\frac{1}{2a} \Big(b-b\big(-n, \tfrac{t}{2a}\big)\Big).
\ee
It is straightforward to check that $\hat a(n,t), \hat b(n,t)$ satisfy the Toda equations \eqref{tl} associated with 
the initial profile
\begin{align*} \label{ini2}
\begin{split}	
& \hat a(n,0)\to \frac {1}{2a}, \quad \hat b(n,0) \to \frac{b}{2a}, \quad \mbox{as $n \to -\infty$}, \\
& \hat a(n,0)\to \frac{1}{2}, \quad \hat b(n,0) \to 0, \quad \mbox{as $n \to +\infty$}.
\end{split}
\end{align*}
For this solution we obtain by our previous results in the region $\frac{n}{t} \in (0,1)$ that
\be\label{newas}
\hat a(n,t) = \frac{n}{2t} + \frac {f(n,t)}{t}, \quad \hat b(n,t)=1 - \frac{n}{t}+ \frac{g(n,t)}{t},
\ee
where $f(n,t)$ (resp., $g(n,t)$) are the same as  the second order  terms in \eqref{amain} (resp., \eqref{bmain}), but the function $u$ corresponds to a new Jacobi operator \eqref{ht} with coefficients $\hat H\sim \{\hat a(n,0), \hat b(n,0)\}$. 
Set $\tau = \frac{t}{2a}$ and  $m=-n-1$.
If $\frac{n}{t}\in (0,1)$, then $\frac{m}{\tau}\in (-2a,0)$.
From \eqref{newab} and \eqref{newas} we get 
\begin{align*}
 a(m, \tau) &= 2a \hat a(n,t)= \frac{2a n}{2t} + \frac {2a f(n,t)}{t} =\frac{-m-1}{2\tau} +
 \frac{f(-m-1, 2a\tau)}{\tau}, \\
b(m,\tau) &=b - 2a\hat b(n+1,t)=b- 2a - \frac{m}{\tau} - \frac{g(-m-1, 2a\tau)}{\tau}.
\end{align*}
An elementary analysis shows that the right reflection coefficient of $\hat H$, given in the variable $\zeta$, 
is the same as the left reflection coefficient $R_1(\zeta^{-1})$ of the initial operator $H(0)$, and the discrete spectra of both operators are the same.
In terms of $\zeta$ we denote the eigenvalues by $\zeta_1, \dots,\zeta_N$ and set
\[
P_1(\zeta)= \prod_{\zeta_j \in (0,1)} |\zeta_j| \frac{\zeta - \zeta_j^{-1}}{\zeta-\zeta_j}.
\]
Note that $P(z)$ corresponds to the eigenvalues $\la_j$ of $H(0)$, which satisfy $\la_j<-1$, and $P_1(\zeta)$ corresponds to $\la_j>b+2a$. Now set $\zeta=\E^{\I\theta}$ and
\[U(\theta)=\arg\left( R_1(\E^{\I\theta})P^{-2}_1(\E^{\I\theta})- R_1(1)P^{-2}_1(1)\right),\quad \theta\in [-\theta_0, \theta_0].
\]

\begin{theorem}\label{propos}
Let $\varepsilon>0$ be an arbitrary small number. Suppose that the initial operator $H(0)$ has no resonances at the points $-1,1,b+2a$, and let $R_1(\zeta)$ be its left reflection coefficient, $\la - b =a(\zeta + \zeta^{-1})$. Then in the 
domain $-(2a -\varepsilon)t\leq n\le -\varepsilon t$ the following asymptotic is valid as $t\to +\infty$ in the non-resonant case,
\begin{align}\nn
a(n,t)&=-\frac{n+1}{2 t}  - \frac{n+1}{16 a t^2} \left(\frac{\sqrt{1 +\frac{n+1}{2at}} + 1 -\frac{n+1}{at}}
{\frac{n+1}{2at}\left( 1 + \frac{n+1}{2at} + \sqrt{1 +\frac{n+1}{2at}}\right)} \right.\\ \nn
&\qquad \left.-\frac{1}{\sqrt 2 \pi}\int_{-\theta_0}^{ \theta_0} \frac{2U^\prime(\theta)\sin\frac{\theta}{2} -U(\theta)\cos \frac{\theta}{2}}{\sqrt{1+\frac{n+1}{at} -\cos\theta}\sin^2\frac{\theta}{2}}d\theta\right) + O(t^{-4/3}), \\
\nn
 b(n,t) &= b-2a-\frac{n}{t} - \frac{1}{2t}\Bigg(\frac{1}{2 \sqrt{1 + \frac{n+1}{2at}}} 
+ \tilde B(\tfrac nt)\Bigg) + O(t^{-4/3}),
\end{align}
where 
$\theta_0=\arccos (1 + \frac{n+1}{ a t})\in (0,\pi)$ and 
\begin{align*} \tilde B(\tfrac {n}{t}) &= \frac{1}{\sqrt 2 \pi  }\int_{-\theta_0}^{\theta_0}\Bigg(\frac{n+1}{4a t}
 \frac{2U^\prime(\theta)\sin\frac{\theta}{2} -U(\theta)\cos \frac{\theta}{2}}{\sqrt{1+\frac{n+1}{at} -\cos\theta}\sin^2\frac{\theta}{2}}  + \frac{2 U^\prime(\theta)
\sin\frac{\theta}{2}+ U(\theta)\cos\frac{\theta}{2}}{\sqrt{1+\frac{n+1}{at} -\cos\theta}}
\\
&\qquad +
\sqrt{1+\frac{n+1}{at}-\cos\theta}\frac{2U^\prime(\theta)\sin\frac{\theta}{2} -U(\theta)\cos \frac{\theta}{2}}{2\sin^2\frac{\theta}{2}}\Bigg)d\theta.\end{align*}
\end{theorem}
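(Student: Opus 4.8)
The strategy is to avoid repeating the nonlinear steepest descent analysis and instead to reduce the sector $\frac{n}{t}\in(-2a,0)$ to the sector $\frac{n}{t}\in(0,1)$ already settled in Theorem~\ref{theoras}, by means of the transformation \eqref{newab}. The first step is to verify that $\{\hat a(n,t),\hat b(n,t)\}$ solves \eqref{tl}; this follows by composing three elementary symmetries of the Toda lattice: the spatial reflection $(a(n,t),b(n,t))\mapsto(a(-n-1,t),-b(-n,t))$, which is checked from \eqref{tl} using the staggered conjugation $y(n)\mapsto(-1)^ny(-n)$ of the associated Jacobi operator, the scaling $a(n,t)\mapsto\tfrac{1}{2a}a(n,\tfrac{t}{2a})$, $b(n,t)\mapsto\tfrac{1}{2a}b(n,\tfrac{t}{2a})$, and the invariance under a constant shift of $b$. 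Letting $n\to\pm\infty$ shows that $\{\hat a,\hat b\}$ is again a steplike rarefaction profile with right background $\{\tfrac12,0\}$, and computing the edges of its left background spectrum shows that the separation condition \eqref{main} for $H(0)$ is equivalent to the corresponding separation condition for the transformed operator $\hat H$. Hence Theorem~\ref{theoras} applies to $\{\hat a,\hat b\}$ on its right sector $\hat\xi=\tfrac{n}{t}\in(0,1)$.

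The second step is to check that the hypotheses of Theorem~\ref{propos} are exactly those needed to invoke Theorem~\ref{theoras} for $\hat H$, and to identify the scattering data. Under the reflection-scaling the spectral parameter transforms as $\hat\lambda=\tfrac{b-\lambda}{2a}$, the sign reversal being a consequence of $b\mapsto-b$ and the staggering, which together give $\sigma(\hat H)=\tfrac{1}{2a}\bigl(b-\sigma(H)\bigr)$. Tracing the three spectral edges at which Theorem~\ref{theoras} requires $\hat H$ to be non-resonant through this map produces exactly the points $-1,1,b+2a$ of $H(0)$; since resonance is defined through the Wronskian of Jost solutions, which transform covariantly under \eqref{newab}, the two non-resonance conditions are equivalent. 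Tracing the Jost solutions through the transformation further identifies the right reflection coefficient of $\hat H$ with the left reflection coefficient $R_1$ of $H(0)$ in the corresponding spectral variable, with the reference point $z=-1$ of Theorem~\ref{theoras} mapped to $\zeta=1$, and shows that the two operators share the same discrete spectrum. This is the only genuinely scattering-theoretic input, and it explains the appearance of $R_1$, $P_1$ and $U(\theta)$ in place of $R$, $P$ and $u(\theta)$.

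With these identifications the remaining work is a change of variables. Applying Theorem~\ref{theoras} to $\{\hat a,\hat b\}$ gives \eqref{newas}, and inverting \eqref{newab} with $\tau=\tfrac{t}{2a}$ and $m=-n-1$ produces the expressions for $a(m,\tau),b(m,\tau)$ recorded just before the statement; relabelling $(m,\tau)$ as $(n,t)$ yields the leading terms $-\tfrac{n+1}{2t}$ and $b-2a-\tfrac{n}{t}$ immediately. For the subleading terms I would substitute the value $\hat\xi=\tfrac{-n-1}{2at}$ into the brackets of \eqref{amain} and \eqref{bmain}; then $\cos\theta_0=1-2\hat\xi=1+\tfrac{n+1}{at}$ reproduces the stated $\theta_0$, while the overall prefactor $2a$ together with the time rescaling (so that a $T^{-1}$-correction with $T=2a\tau$ becomes a $(2at)^{-1}$-correction before multiplication by $2a$) accounts for the coefficients $\tfrac{1}{16a}$ and $\tfrac12$ and for the sign changes relative to Theorem~\ref{theoras}.

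Since the map \eqref{newab} is $t$-independent up to the fixed rescaling, the uniform $O(t^{-4/3})$ estimate is inherited directly from Theorem~\ref{theoras}, and I expect the main obstacle to be organizational rather than analytic: one must keep the reflection $n\mapsto-n-1$, the scaling, the sign flip of $b$, and the index shift between $\hat b(n+1,t)$ and the formula for $b(n,t)$ all mutually consistent. The single most delicate point is transporting the angular integral of Theorem~\ref{theoras} from its $z$-formulation to the $\zeta$-formulation in terms of the left data: one has to carry the integration contour, the weight, and the reference point correctly through the spectral correspondence induced by \eqref{spec5}, and verify that the result assembles into the real quantities $U(\theta)$ and $U^\prime(\theta)$ displayed in the statement. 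Once this dictionary is fixed, Theorem~\ref{propos} follows by direct substitution.
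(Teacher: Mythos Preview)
Your proposal is correct and follows essentially the same approach as the paper: reduce the sector $\tfrac{n}{t}\in(-2a,0)$ to the already-treated sector $(0,1)$ via the transformation \eqref{newab}, identify the right scattering data of $\hat H$ with the left data of $H(0)$, and then invert the change of variables $m=-n-1$, $\tau=\tfrac{t}{2a}$. You supply more detail than the paper on verifying that $\{\hat a,\hat b\}$ solves \eqref{tl} and on transporting the non-resonance conditions and the angular integral, but the strategy and all key steps coincide.
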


\bigskip
\noindent {\bf Acknowledgment.} I.E.\ is indebted to the Faculty of Mathematics at the University of Vienna for its hospitality and support during the fall semester of 2015, where this work was done.

\end{document}